\definecolor{webgreen}{rgb}{0,.5,0}
\definecolor{webblue}{rgb}{0,0,.5}
\DeclareMathOperator*{\Exp}{\mathbb{E}}
\numberwithin{equation}{section}
\spnewtheorem{scheme}{Scheme}{\bfseries}{\itshape}
\spnewtheorem{assumption}{Assumption}{\bfseries}{\itshape}
\spnewtheorem*{proposition*}{Proposition}{\bfseries\upshape}{\itshape}
\spnewtheorem*{theorem*}{Theorem}{\bfseries\upshape}{\itshape}
\newcommand{\opn}{\operatorname}
\newcommand{\Z}{\mathbb{Z}}
\newcommand{\F}{\mathbb{F}}
\newcommand{\N}{\mathbb{N}}
\newcommand{\GL}{\textsf{GL}}
\newcommand{\SL}{\textsf{SL}}
\newcommand{\bits}{\{0,1\}}
\newcommand{\expref}[2]{\texorpdfstring{\hyperref[#2]{#1~\ref{#2}}}{#1~\ref{#2}}}
\newcommand{\revise}[1]{}
\newcommand{\algo}{\mathcal}
\newcommand{\negl}{\opn{negl}}
\newcommand{\poly}{\opn{poly}}
\newcommand{\inrand}{\in_R}
\newcommand{\ket}[1]{{\left\vert{#1}\right\rangle}}
\newcommand{\adv}{\textbf{Adv}}
\newcommand{\KeyGen}{\ensuremath{\mathsf{KeyGen}}}
\newcommand{\Enc}{\ensuremath{\mathsf{Enc}}}
\newcommand{\Dec}{\ensuremath{\mathsf{Dec}}}
\newcommand{\Mac}{\ensuremath{\mathsf{Mac}}}
\newcommand{\Ver}{\ensuremath{\mathsf{Ver}}}
\newcommand{\HS}{{\normalfont\textsc{Hidden Shift}}}
\newcommand{\hs}{{\textsf{HS}}}
\newcommand{\RHS}{{\normalfont\textsc{Random Hidden Shift}}}
\newcommand{\rhs}{{\textsf{RHS}}}
\newcommand{\DRHS}{{\normalfont\textsc{Decisional Random Hidden Shift}}}
\newcommand{\drhs}{{\textsf{DRHS}}}
\newcommand{\HSP}{{\normalfont\textsc{Hidden Subgroup Problem}}}
\newcommand{\hsp}{{\textsf{HSP}}}
\newcommand{\RHSP}{{\normalfont\textsc{Random Hidden Subgroup Problem}}}
\newcommand{\rhsp}{{\textsf{RHSP}}}
\newcommand{\emd}{{\textsf{EMD}}}
\title{Quantum-Secure Symmetric-Key Cryptography\\ Based on Hidden Shifts}
\author{Gorjan Alagic\inst{1} \and Alexander Russell\inst{2}}
\institute{QMATH, Department of Mathematical Sciences\\ University of
  Copenhagen\\ \email{galagic@gmail.com} \and
Department of Computer Science and Engineering\\ University of Connecticut\\ \email{acr@cse.uconn.edu}
}
\begin{document}

\maketitle

\begin{abstract}
Recent results of Kaplan et al., building on work by Kuwakado and Morii, have shown that a wide variety of classically-secure symmetric-key cryptosystems can be completely broken by \emph{quantum chosen-plaintext attacks} (qCPA). In such an attack, the quantum adversary has the ability to query the cryptographic functionality in superposition. The vulnerable cryptosystems include the Even-Mansour block cipher, the three-round Feistel network, the Encrypted-CBC-MAC, and many others.

In this article, we study simple algebraic adaptations of such schemes that replace $(\Z/2)^n$ addition with operations over alternate finite groups---such as $\Z/2^n$---and provide evidence that these adaptations are qCPA-secure. These adaptations furthermore retain the classical security properties and basic structural features enjoyed by the original schemes.

We establish security by treating the (quantum) hardness of the
well-studied \emph{Hidden Shift problem} as a cryptographic
assumption. We observe that this problem has a number of attractive
features in this cryptographic context, including random
self-reducibility, hardness amplification, and---in many cases of
interest---a reduction from the ``search version'' to the ``decisional
version.''  We then establish, under this assumption, the qCPA-security of
several such Hidden Shift adaptations of symmetric-key constructions. We show that a Hidden Shift version of the Even-Mansour block cipher
yields a quantum-secure pseudorandom function, and that a Hidden Shift
version of the Encrypted CBC-MAC yields a collision-resistant hash
function. Finally, we observe that such adaptations frustrate the direct Simon's algorithm-based attacks in more general circumstances, e.g., Feistel networks and slide attacks.
\end{abstract}

\newpage

\section{Introduction}\label{sec:intro}

The discovery of efficient quantum algorithms for algebraic problems
with longstanding roles in cryptography, like factoring and discrete
logarithm~\cite{Shor94}, has led to a systematic re-evaluation of
cryptography in the presence of quantum attacks. Such attacks can, for
example, recover private keys directly from public keys for many
public-key cryptosystems of interest. A 2010 article of \citet{KM10}
identified a new family of quantum attacks on certain generic
constructions of \emph{private-key} cryptosystems. While the attacks
rely on similar quantum algorithmic tools (that is, algorithms for the
hidden subgroup problem), they qualitatively differ in several other
respects. Perhaps most notably, they break reductions which are
information-theoretically secure\footnote{The adversary is permitted to query the oracle a polynomial number of times, but may perform arbitrarily complex computations between queries.} in the classical setting. On the
other hand, these attacks require a powerful ``quantum CPA'' setting which
permits the quantum adversary to make queries---in superposition---to
the relevant cryptosystem.

These quantum chosen-plaintext attacks (qCPA) have been generalized and expanded to apply
to a large family of classical symmetric-key constructions, including
Feistel networks, Even-Mansour ciphers, Encrypted-CBC-MACs, tweakable block ciphers, and others~\cite{KLLN16, KM10, KM12, SS17}. A unifying feature of all these new attacks, however,
is an application of Simon's algorithm for recovering ``hidden
shifts'' in the group $(\Z/2)^n$. Specifically, the attacks exploit an
internal application of addition $(\operatorname{mod} 2)$ to construct
an instance of a hidden shift problem---solving the hidden shift
problem then breaks the cryptographic construction. As an illustrative
example, consider two (independent) uniformly random permutations
$P, Q: \{0,1\}^n \rightarrow \{0,1\}^n$ and a uniformly random element
$z$ of $\{0,1\}^n$. It is easy to see that no classical algorithm can
distinguish the function $(x,y) \mapsto (P(x), Q(y))$ from the
function $(x,y) \mapsto (P(x), P(y \oplus z))$ with a polynomial
number of queries; this observation directly motivates the classical
Even-Mansour block-cipher construction. On the other hand, an
efficient quantum algorithm with oracle access to
$(x,y) \mapsto (P(x), P(y \oplus z))$ can apply Simon's algorithm to
recover the ``hidden shift'' $z$ efficiently; this clearly allows the algorithm
to distinguish the two cases above.

While these attacks threaten many classical private-key constructions,
they depend on an apparent peculiarity of the group $(\Z/2)^n$---the
\HS~problem over $(\Z/2)^n$ admits an efficient
quantum algorithm. In contrast, \HS~problems in general have
resisted over 20 years of persistent attention from the quantum
algorithms community. Indeed, aside from Simon's polynomial-time
algorithm for hidden shifts over $(\Z/2)^n$, generalizations to certain groups of constant exponent~\cite{FIMSS14}, and Kuperberg's
$2^{O(\sqrt{\log N})}$ algorithm for hidden shifts over $\Z/N$~\cite{Kuperberg05}, very little is known. 
This dearth of progress is not for lack of motivation. In fact, it is well-known that efficient quantum algorithms for \textsc{Hidden Shift} over $\Z/N$ would (via a well-known reduction from the {\HSP} on $D_N$) yield efficient quantum attacks
on important public-key cryptosystems~\cite{Regev04-lattice,Regev05}, including prime candidates for quantum security and the eventual replacement of RSA in Internet cryptography~\cite{NIST16}. Likewise, efficient algorithms for the symmetric group would yield polynomial-time quantum algorithms for \textsc{Graph Isomorphism}, a longstanding challenge in the area.

On the other hand, $(\Z/2)^n$ group structure is rather incidental to
the security of typical symmetric-key constructions. For example, the classical Even-Mansour construction defines a block cipher $E_{k_1, k_2}(m)$ by the rule
\[
E_{k_1, k_2}(m) = P(m \oplus k_1) \oplus k_2\,,
\]
where $P$ is a public random permutation and the secret key
$(k_1, k_2)$ is given by a pair of independent elements drawn
uniformly from $(\Z/2)^n$. The security proofs, however, make no
particular assumptions about group structure, and apply if the
$\oplus$ operation is replaced with an alternative group operation,
e.g., $+$ modulo $N$ or multiplication in $\F_{2^n}$.

This state of affairs suggests the possibility of ruling out quantum
attacks by the simple expedient of adapting the underlying group in
the construction. Moreover, the apparently singular features of
$(\Z/2)^n$ in the quantum setting suggest that quite mild adaptations
may be sufficient. As mentioned above, many classical security proofs
are unaffected by this substitution; our primary goal is to add
security against quantum adversaries. 
 \emph{Our approach is to
reduce well-studied Hidden Shift problems directly to the
security of these symmetric-key cryptosystems. Thus, efficient quantum chosen-plaintext
attacks on these systems would resolve long-standing open questions in
quantum complexity theory.}

\subsection{Contributions}

\subsubsection{Hidden Shift as a cryptographic primitive.}

We propose the intractability of the \textsc{Hidden Shift} problem as a fundamental assumption for establishing quantum security of cryptographic schemes. In the general problem, we are given two functions on some finite group $G$, and a promise that one is a shift of the other; our task is to identify the shift. Our assumptions have the following form: 


\begin{assumption}[The $\mathcal{G}$-Hidden Shift Assumption,
  informal]\label{assumption-one}\label{core-assumption-inf} Let
  $\mathcal{G} = \{ G_i \mid i \in I\}$ be a family of finite groups
  indexed by a set $I \subset \{0,1\}^*$. For all polynomial-time
  quantum algorithms $\algo A$,
  $$
  \Exp_f \Bigl[ \min_{\substack{s \in G_i}} \Pr\bigl[\algo A^{f,
    f_s}(i) = s\bigr]\Bigr] \leq \negl(|i|)\,,
  $$
  where $f_s(x) = f(sx)$, the expectation is taken over random choice
  of the function $f$, the minimum is taken over all shifts
  $s \in G_i$, and the probability is taken over internal randomness
  and measurements of $\mathcal{A}$.
\end{assumption}

This assumption asserts that there is no quantum algorithm for
\HS~(over $\mathcal G$) in the \emph{worst-case} over $s$, when
function values are chosen randomly. Note that the typical formulation
in the quantum computing literature is worst case over $s$ and $f$; on
the other hand, known algorithmic approaches are invariant under
arbitrary relabeling of the value space of $f$. The ``random-valued''
case thus seems satisfactory for our cryptographic purposes. (In fact,
our results can alternatively depend on the case where $f$ is
injective, rather than random.) See
\expref{Section}{sec:HSP-assumption} below for further discussion and
precise versions of \expref{Assumption}{assumption-one}.  In general,
formulating such an assumption requires attention to the encoding of
the group. However, we will focus entirely on groups with
conventional encodings which directly provide for efficient group
operations, inversion, generation of random elements,
etc. Specifically, we focus on the two following particular variants:
\begin{assumption}[The $2^n$-Cyclic Hidden Shift
  Assumption]\label{cyclic-assumption}
  This is the Hidden Shift Assumption with the group family
  $\mathcal{C}_2 = \{ \Z/2^n \mid n \geq 0\}$ where the index consists of
  the number $n$ written in unary.
\end{assumption}
\begin{assumption}[The Symmetric Hidden Shift
  Assumption]\label{symmetric-assumption}
  This is the Hidden Shift Assumption with the group family
  $\mathcal{S} = \{ S_n \mid n \geq 0\}$ where $S_n$ denotes the
  symmetric group on $n$ symbols and the index consists of the number
  $n$ written in unary.
\end{assumption}
\noindent
In both cases the size of the group is exponential in the length of the index.

We remark that the \textsc{Hidden Shift} problem has polynomial
quantum query complexity~\cite{EHK04}---thus one cannot hope that
\textsc{Hidden-Shift}-based schemes possess information-theoretic
security in the quantum setting (as they do in the classical setting);
this motivates introduction of \HS~intractability assumptions.

\begin{sloppypar}
To explore the hardness of \HS~problems against quantum polynomial-time (QPT) algorithms, we describe several reductions. First, we prove that \HS~is equivalent to a randomized version of the problem where the shift $s$ is random (\RHS), and provide an amplification theorem which is useful in establishing security of schemes based on \expref{Assumption}{core-assumption-inf}. 
\end{sloppypar}

\begin{proposition*}[Amplification, informal]
  Assume there exists a QPT algorithm which solves \RHS~for an
  inverse-polynomial fraction of inputs. Then there exists a QPT
  algorithm for solving both \HS~and \RHS~for all but a negligible
  fraction of inputs.
\end{proposition*}

We then show that, for many group families, \textsc{Hidden Shift} over the relevant groups is equivalent to a \emph{decisional} version of the problem. In the decisional version, we are guaranteed that the two functions are either (i.) both random and independent, or (ii.) one is random and the other is a shift; the goal is to decide which is the case.

\begin{theorem*}[Search and decision are equivalent, informal]
  Let $\mathcal G$ be the group family $\mathcal{C}_2$ or the group
  family $\mathcal S$ (or a group family with an efficient subgroup
  series). Then there exists a QPT algorithm for \RHS~(with at most inverse-poly
  error) over $\mathcal G$ if and only if there exists a QPT algorithm for \DRHS~(with at most
  inverse-poly error) over $\mathcal G$.
\end{theorem*}

Finally, we provide some evidence that \HS~over the family $\mathcal{C}_2$ is as hard as \HS~over general cyclic groups. Specifically, we show that efficient algorithms for an approximate version of \HS~over $\mathcal{C}_2$ give rise to efficient algorithms for the same problem over $\mathcal{C}$, the family of all cyclic groups.

We also briefly discuss the connections between \HS, the assumptions above, and assumptions underlying certain candidates for quantum-secure public-key cryptography~\cite{DMR15, Regev04-lattice}. For completeness, we recall known connections to the \HSP.
Both the \HS~and \HSP~families have received significant attention from the quantum algorithms community, and are believed to be quantumly hard with the exception of particular families of groups~\cite{DMR15, HMRRS10,MRS07,MRS08,Regev04-lattice}. 

\subsubsection{Quantum-secure symmetric-key cryptographic schemes.}

With the above results in hand, we describe a generic method for using \expref{Assumption}{core-assumption-inf} to ``adapt'' classically-secure schemes in order to remove vulnerabilities to quantum chosen-plaintext attacks. The adaptation is simple: replace the underlying $(\Z/2)^n$ structure of the scheme with that of either $\mathcal C_2$  or $\mathcal S$. This amounts to replacing bitwise XOR with a new group operation. In the case of $\mathcal C_2$, the adaptation is particularly simple and efficient.  

While our basic approach presumably applies in broad generality, we focus on three emblematic examples: the Even-Mansour construction---both as a PRF and as a block cipher---and the CBC-MAC construction. We focus throughout on the group families $\mathcal{C}_2$ and $\mathcal{S}$, though we also discuss some potential advantages of other choices (see \expref{Section}{sec:groups}). Finally, we discuss related quantum attacks on cryptographic constructions, including the 3-round Feistel cipher and 
quantum slide attacks~\cite{KLLN16}. We remark that the Feistel cipher over groups other than $(\Z/2)^n$ has been considered before, in a purely classical setting~\cite{PRS03}.

\paragraph{Hidden Shift Even-Mansour.}

Following the prescription above, we define group variants of the
Even-Mansour cipher. We give a reduction from the worst-case
\HS~problem to the natural \emph{distinguishability} problem (i.e.,
distinguishing an Even-Mansour cipher from a random permutation).
Thus, under the Hidden Shift Assumption, the Even-Mansour construction
is a quantum-query-secure pseudorandom function (qPRF). In particular,
key-recovery is computationally infeasible, even for a quantum
adversary. We also provide (weaker) reductions between \HS~and the
problem of breaking Even-Mansour in the more challenging case where
the adversary is provided access to both the public permutation and
its inverse (and likewise for the encryption map).  In any case, these
adaptations frustrate the ``Simon algorithm key recovery
attack''~\cite{KM12, KLLN16}, as this would now require a subroutine
for \HS~in the relevant group family. Moreover, one can also apply
standard results (see, e.g.,~\cite{HN04}) to show that, over some
groups, all bits of the key are as hard as the entire key (and hence,
by our reductions, as hard as~\HS). We remark that considering
$\Z/{2^n}$ structure to define an adaptation of Even-Mansour has been
considered before in the context of classical slide
attacks~\cite{DKS15}.

\paragraph{Hidden Shift CBC-MAC.}

Following our generic method for transforming schemes, we define group
variants of the Encrypted-CBC-MAC. We establish that this primitive is
collision-free against quantum adversaries. Specifically, we show that
any efficient quantum algorithm which discovers collisions in the
Hidden-Shift Encrypted-CBC-MAC with non-negligible probability would
yield an efficient worst-case quantum algorithm for \HS~over the
relevant group family. As with Even-Mansour, this adaptation also
immediately frustrates the Simon's algorithm collision-finding
attacks~\cite{KLLN16, SS17}.

\paragraph{Feistel ciphers, slide attacks.}

We also define group variants of the well-known Feistel cipher for constructing pseudorandom permutations from pseudorandom functions. Our group variants frustrate Simon-style attacks~\cite{KM10}; a subroutine for the more general \HS~problem is now required. Finally, we also address the exponential quantum speedup of certain classical slide attacks, as described in~\cite{KLLN16}. We show how one can once again use \HS~to secure schemes vulnerable to these ``quantum slide attacks.''

\section{Preliminaries}\label{sec:prelims}

\paragraph{Notation; remarks on finite groups.}

For a finite group $G$ and an element $s \in G$, let
$L_s: G \rightarrow G$ denote the permutation given by left
multiplication by $s$, so $L_s: x \mapsto s \cdot x$. We discuss a
number of constructions in the paper requiring computation in finite
groups and assume, throughout, that elements of the group in question
have an encoding that efficiently permits such natural operations as
product, inverse, selection of uniformly random group elements,
etc. As our discussion focuses either on specific groups---such as
$(\Z/2)^n$ or $\Z/N$---where such encoding issues are straightforward
or, alternatively, \emph{generic} groups in which we assume such
features by fiat, we routinely ignore these issues of encoding.

\paragraph{Classical and quantum algorithms.}

Throughout we use the abbreviation PPT for ``probabilistic polynomial
time,'' referring to an efficient classical algorithm, and QPT for
``quantum polynomial time,'' referring to an efficient quantum
algorithm. Our convention is to denote algorithms of either kind with
calligraphic letters, e.g., $\algo A$ will typically denote an
algorithm which models an adversary. If
$f$ is a function,
the notation $\algo A^f$ stands for an algorithm (either classical or
quantum) with oracle access to the function $f$. A classical oracle is
simply the black-box gate $x \mapsto f(x)$; a quantum oracle is the
unitary black-box gate
$\ket{x}\ket{y} \mapsto \ket{x}\ket{y \oplus f(x)}$. Unless stated
otherwise, oracle QPT algorithms are assumed to have quantum oracle
access.

\paragraph{Quantum-secure pseudorandomness.}

We now set down a way of quantifying the ability of a QPT adversary to
distinguish between families of functions. Fix a function family
$\mathcal F \subset \{ h : \bits^m \rightarrow \bits^\ell\}$, a
function $f : \bits^n \times \bits^m \rightarrow \bits^\ell$, and
define $f_k := f(k, \cdot)$. We say that $f$ is an \emph{indexed
  subfamily} of $\mathcal F$ if $f_k \in \mathcal F$ for every
$k \in \bits^n$. We will generally assume that $m$ and $\ell$ are
polynomial functions of $n$ and treat $n$ to be the complexity (or
security) parameter.

\begin{definition} Let $\mathcal F$ be a function family, $f$ an indexed subfamily, and $\algo D$ an oracle QPT algorithm. The distinguishing advantage of $\algo D$ is the quantity
$$
\emph{\adv}^{\algo D}_{\mathcal F, f} := \left|\underset{k \inrand \bits^n}{\Pr}\Bigl[\mathcal D^{f_k}(1^n) = 1\Bigr] - 
\underset{g \inrand \mathcal F}{\Pr}\Bigl[\mathcal D^g(1^n) = 1\Bigr] \right |\,.
$$
\end{definition}

Next, we define efficient indexed function families which are pseudorandom against QPT adversaries. We emphasize that these function families are computed by deterministic classical algorithms.

\begin{definition}\label{def:qPRF}
Let $\mathcal{F}_n$ be the family of all functions from $m(n)$ bits to $\ell(n)$ bits, and $f$ a efficiently computable, indexed subfamily of $\bigcup_n \mathcal{F}_n$ (so that $f_k \in \mathcal{F}_n$ for $|k| = n$). We say that $f$ is a quantum-secure pseudorandom function $(qPRF)$ if $\emph{\adv}^{\algo D}_{\mathcal{F}_n, f} \leq \negl(n)$ for all QPT $\algo D$.
\end{definition}

It is known how to construct qPRFs from standard assumptions (i.e., existence of quantum-secure one-way functions)~\cite{Zhandry12}.

The pseudorandom function property is not enough in certain applications, e.g., in constructing block ciphers. It is then often useful to add the property that each function in the family is a permutation, which can be inverted efficiently (provided the index is known).

\begin{definition}\label{def:qPRP}
 Let $\mathcal P$ be the family of all permutations, and $f$ an efficiently computable, indexed subfamily of $\mathcal P$. We say that $f$ is a quantum-secure pseudorandom permutation (qPRP) if (i.) $f$ is a qPRF, (ii.) each $f_k$ is a permutation, and (iii.) there is an efficient algorithm which, given $k$, computes the inverse $f_k^{-1}$ of $f_k$.
\end{definition}

A recent result shows how to construct qPRPs from one-way functions~\cite{Zhandry16}. Finding simpler constructions is an open problem. Two simple constructions which are known to work classically, Even-Mansour and the 3-round Feistel, are both broken by a simple attack based on Simon's algorithm for \HS~on $(\Z/2)^n$. As we discuss in detail later, we conjecture that the adaptations of these constructions to other group families are qPRPs.


We will also make frequent use of a result of Zhandry (Theorem 3.1 in~\cite{Zhandry12a}) which states that $2k$-wise independent functions are indistinguishable from random to quantum adversaries making no more than $k$ queries. 
\begin{theorem}\label{thm:k-wise}
Let $\mathcal H$ be a $2k$-wise independent family of functions with domain $\mathcal X$ and range $\mathcal Y$. Let $\algo D$ be a quantum algorithm making no more than $k$ oracle queries. Then
$$
\underset{h \inrand \mathcal H}{\Pr}\Bigl[\mathcal D^{h}(1^n) = 1\Bigr] = \underset{g \inrand \mathcal Y^\mathcal X}{\Pr}\Bigl[\mathcal D^g(1^n) = 1\Bigr] \,.
$$
\end{theorem}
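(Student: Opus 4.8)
The plan is to prove this via the \emph{polynomial method} for quantum query algorithms: express the acceptance probability of $\mathcal D$ as a low-degree polynomial in indicator variables that record the oracle's behavior, and then invoke $2k$-wise independence as a moment-matching condition. The first step is bookkeeping. For each input $x \in \mathcal X$ and value $a \in \mathcal Y$, introduce the indicator variable $P_{x,a} = [h(x) = a]$ (equal to $1$ if $h(x) = a$ and $0$ otherwise), and observe that the quantum oracle $\ket{x}\ket{y} \mapsto \ket{x}\ket{y \oplus h(x)}$ can be written as a matrix whose nonzero entries are single such variables:
$$O_h = \sum_{x,y,a} P_{x,a}\, \outerprod{x,\, y \oplus a}{x, y} \otimes I\,,$$
with the workspace register untouched. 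In particular $O_h$ is homogeneous of degree one in the collection $\{P_{x,a}\}$.

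The central step, which I expect to be the main technical obstacle, is the polynomial lemma: if $\mathcal D$ makes $q \le k$ queries, then every amplitude of its final state is a polynomial of degree exactly $q$ in the variables $\{P_{x,a}\}$. This follows because the final state has the form $U_q O_h U_{q-1} \cdots O_h U_0 \ket{\psi_0}$, where the unitaries $U_i$ and the initial state $\ket{\psi_0}$ are independent of $h$; since each of the $q$ copies of $O_h$ contributes degree one and the $U_i$ contribute only constants, the amplitudes are degree-$q$ polynomials in the $\{P_{x,a}\}$. Consequently the acceptance probability $\Pr[\mathcal D^h(1^n) = 1]$, obtained by summing squared magnitudes of amplitudes over accepting basis states, is a real polynomial of degree at most $2q \le 2k$ in the $\{P_{x,a}\}$.

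Finally I would reduce this polynomial modulo the relations $P_{x,a}^2 = P_{x,a}$ and $P_{x,a} P_{x,a'} = 0$ for $a \ne a'$, which hold identically on genuine functions, putting it in multilinear form. After this reduction every surviving monomial has the shape $\prod_{j=1}^{t} P_{x_j, a_j}$ with the $x_j$ distinct and $t \le 2k$, and it evaluates to the indicator that $h$ takes the prescribed values at those $t$ points. Hence $\E_h\bigl[\prod_j P_{x_j, a_j}\bigr] = \Pr_h[h(x_1) = a_1, \ldots, h(x_t) = a_t]$ depends only on the joint distribution of $h$ restricted to at most $2k$ inputs. By the definition of $2k$-wise independence this joint distribution coincides with that of a uniformly random function, so every monomial — and therefore the entire acceptance probability — has the same expectation under $h \inrand \mathcal H$ as under $g \inrand \mathcal Y^{\mathcal X}$, which is exactly the claimed equality.

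The only genuinely delicate point is the degree accounting inside the polynomial lemma: one must check that taking squared magnitudes does not push the degree above $2q$, and that multilinearization never produces a monomial touching more than $2k$ distinct inputs. Both are immediate from the homogeneity of the expansion of $O_h$, so the whole argument amounts to a careful degree count together with the moment-matching interpretation of $2k$-wise independence.
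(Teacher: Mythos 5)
Your argument is correct; note that the paper does not prove this statement at all but imports it as Theorem~3.1 of Zhandry~\cite{Zhandry12a}, and your polynomial-method derivation---writing the oracle unitary with entries linear in the indicators $P_{x,a}$, concluding that after $q$ queries every amplitude is a polynomial of degree at most $q$ and the acceptance probability one of degree at most $2q \le 2k$, multilinearizing via $P_{x,a}^2 = P_{x,a}$ and $P_{x,a}P_{x,a'} = 0$, and then matching all moments on at most $2k$ distinct inputs using $2k$-wise independence---is essentially the canonical proof of that result. Two cosmetic repairs: ``degree exactly $q$'' should read ``degree at most $q$,'' and you should invoke the deferred-measurement principle so that the acceptance probability really is a sum of squared magnitudes of final-state amplitudes over a computational-basis measurement; neither affects the substance.
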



\paragraph{Collision-freeness.}

We will also need a (standard) definition of collision-resistance against efficient quantum adversaries with oracle access. 

\begin{definition} Let $f : \bits^* \times \bits^* \rightarrow \bits^{*}$ be an efficiently-computable function family defined for all $(k,x)$ for which $|x| = m(|k|)$ (for a polynomial $m$). We say that $f$ is collision-resistant if for all QPT $\algo A$, 
$$
\Pr_{k \inrand \bits^n} \Bigl[ \algo A^{f_k}(1^n) = (x, y) \wedge f_k(x) = f_k(y) \wedge x \neq y \Bigr] \leq \negl(n)\,.
$$
\end{definition}

\section{Hidden Shift as a cryptographic primitive}
\label{sec:HSP-assumption}

We begin by discussing a few versions of the basic oracle promise problem related to finding hidden
shifts of functions on groups. In the problems below, the relevant
functions are given to the algorithm via black-box oracle access and
we are interested in the setting where the complexity of the algorithm
(both number of queries and running time) scales in $\poly(\log |G|)$. 

\subsection{Hidden Shift problems}\label{sec:hidden-shift}

\subsubsection{Basic definitions.}

We begin with the \HS~problem. As traditionally formulated in the quantum computing literature, the problem is the following:
\begin{problem}[The traditional Hidden Shift problem] \label{problem:HS-trad}
Let $G$ be a group and $V$ a set. Given oracle access to an injective function $f: G \rightarrow V$ and an unknown shift $g = f \circ L_s$ of $f$, find $s$.
\end{problem}

It is convenient for us to parameterize this definition in terms of a
specific group family and fix the range of the oracles $f$ and $g$. This yields our basic asymptotic definition for the problem.
\begin{problem}[$\HS$~($\hs$)] \label{problem:HS} Let
  $\mathcal{G} = \{ G_i \mid i \in I\}$ be a family of groups with
  index set $I \subset \{0,1\}^*$ and let $\ell: \N \rightarrow \N$ be
  a polynomial. Then the \HS\ problem over $\mathcal{G}$ (with length
  parameter $\ell$) is the following: given an index $i$ and oracle
  access to a pair of functions
  $f, g: G_i \rightarrow \{0,1\}^{\ell(|i|)}$ where $g(x) = f(sx)$,
  determine $s \in G_i$.  We assume, throughout, that
  $2^{\ell(|i|)} \gg |G_i|$.
\end{problem}

This generic formulation is more precise, but technically still
awkward for cryptographic purposes as it permits oracle access to
completely arbitrary functions
$f$. 
To avoid this technical irritation, we focus on the performance of
\HS\ algorithms over specific classes of functions $f$. Specifically,
we either assume $f$ is random or that it is injective. When a
\HS~algorithm is applied to solve problems in a typical computational
setting, the actual functions $f, g$ are injective and given by
efficient computations. We remark that established algorithmic
practice in this area ignores the actual function values altogether,
merely relying on the structure of the level sets of the function
\[
\Phi(x,b) = \begin{cases} f(x) & \text{if $b=0$,}\\
  g(x) & \text{if $b=1$.} \end{cases}
\]
In particular, such structural conditions of $f$ appear to be irrelevant to the success of current quantum-algorithmic techniques for the problem. This motivates the following notion of ``success'' for an algorithm.

\begin{sloppypar}
\begin{definition}[Completeness]\label{def:completeness} Let $\algo A$ be an algorithm for the
  \HS~problem on $\mathcal{G}$ with length parameter $\ell$.  Let $f$
  be a function defined on all pairs $(i,x)$ where $x \in G_i$ so that
  $f(i,x) \in \{0,1\}^{\ell(|i|)}$. Then we define the
  \emph{completeness of $\algo A$ relative to $f$} to be the quantity
  \[
    1 - \epsilon_f(i) \triangleq \min_{s \in G_i} \Pr[{\algo A}^{f,f_s}(i) = s]\,.
  \]
  The \emph{completeness of $\algo A$ relative to random functions} is the average
  \[
    1 - \epsilon_R(i) \triangleq \Exp_{f}\left[ \min_{s \in G_i} \Pr[{\algo A}^{f,f_s}(i) = s]\right] = \Exp_f [1 - \epsilon_f(i)]\,,
  \]
  where $f(i,x)$ is drawn uniformly at
  random. Note that these notions are worst-case in $s$,
  the shift.
\end{definition}
\end{sloppypar}

Note that this definition does not specify how the algorithm should behave on instances that are
not hidden shifts. For simplicity, we assume that the algorithm returns a value for $s$ in any case, with no particular guarantee on
$s$ in the case when the functions are not shifts of each other.


Our basic hardness assumption is the following:
\begin{assumption}[The $\mathcal{G}$-Hidden Shift Assumption; randomized]\label{core-assumption} Let
  $\mathcal{G} = \{ G_i \mid i \in I\}$ be a family of finite groups
  indexed by a set $I \subset \{0,1\}^*$ and $\ell: \N \rightarrow \N$
  be a length parameter. Then for all efficient algorithms $\algo A$,
  $1 - \epsilon_R(i) = \negl(|i|)$.
\end{assumption}
For completeness, we also record a version of the assumption for injective $f$.
In practice, our cryptographic constructions will rely only on the randomized version.

\begin{assumption}[The $\mathcal{G}$-Hidden Shift Assumption; injective] Let
  $\mathcal{G} = \{ G_i \mid i \in I\}$ be a family of finite groups
  indexed by a set $I \subset \{0,1\}^*$ and $\ell: \N \rightarrow \N$
  be a length parameter. Then for all efficient algorithms $\algo A$
  there exists an injective $f$ (satisfying the criteria of
  \expref{Definition}{def:completeness} above), so that
  $1 - \epsilon_f(i) = \negl(|i|)$.
\end{assumption}

In preparation for establishing results on security amplification, we define two additional variants of the \HS~problem: a variant where both the function and the shift are randomized, and a decisional variant. Our general approach for constructing security proofs will be to reduce one of these variants to the problem of breaking the relevant cryptographic scheme. As we will later show, an efficient solution to either variant implies an efficient solution to both, which in turn results in a violation of \expref{Assumption}{core-assumption} above.

\begin{problem}[\RHS~(\rhs)] \label{problem:RHS} Let
  $\mathcal{G} = \{ G_i \mid i \in I\}$ be a family of finite groups
  indexed by a set $I \subset \{0,1\}^*$ and $\ell: \N \rightarrow \N$
  be a length parameter. Then the \RHS\ problem over $\mathcal G$ is
  the \HS\ problem where the input function
  $f(i,x)$ is drawn uniformly and the
  shift $s$ is drawn (independently and uniformly) from $G_i$.
\end{problem}

We define the completeness $1 - \epsilon(i)$ for a \RHS~algorithm $\algo A$ analogously to \expref{Definition}{def:completeness}. Observe that a small error is unavoidable for any algorithm, as there exist pairs of functions for which $s$ is not uniquely defined. 
We will also need a decisional version of the problem, defined as follows.

\begin{problem}[\DRHS~(\drhs)] \label{problem:DRHS} Let
  $\mathcal{G} = \{ G_i \mid i \in I\}$ be a family of finite groups
  indexed by a set $I \subset \{0,1\}^*$ and $\ell: \N \rightarrow \N$
  be a length parameter. The
  \DRHS\ problem is the following: Given $i$ and oracle access to two
  functions $f, g: G_i \rightarrow \{0,1\}^{\ell(|i|)}$ with the promise
  that either (i.)  both $f$ and $g$ are drawn independently at
  random, or (ii.)  $f$ is random and $g = f \circ L_s$ for some
  $s \in G$, decide which is the case.
\end{problem}

We say that an algorithm for \drhs~has completeness $1-\epsilon(i)$
and soundness $\delta(i)$ if the algorithm errs with probability no
more than $\epsilon(i)$ in the case that the functions are shifts and
errs with probability no more than $\delta(i)$ in the case that the
functions are drawn independently.

Next, we briefly recall the definition of the (closely-related) \HSP\@. The problem is primarily relevant in our context because of its historical significance (and relationship to \HS); we will not use it directly in any security reductions.
\begin{problem}[\HSP~(\hsp)]
Let $G$ be a group and $S$ a set. Given a function $f : G \rightarrow S$, and a promise that there exists $H \leq G$ such that $f$ is constant and distinct on the right cosets of $H$, output a complete set of generators for $H$.
\end{problem}
Some further details, including explicit reductions between \hs~and \hsp, are given in \expref{Appendix}{appendix:HSP}. 

Of interest are both classical and quantum algorithms for solving the various versions of \hs~and \hsp. The relevant metrics for such algorithms are the query complexity (i.e., the number of times that the functions are queried, classically or quantumly) as well as their time and space complexity. An algorithm is said to be efficient if all three are polynomial in $\log |G|$.

\subsubsection{Hardness results.}

Next, we establish several reductions between these problems. Roughly, these results show that the average-case and decisional versions of the problem are as hard as the worst-case version.

\paragraph{Self-reducibility and amplification.}

First, we show that (i.) both \hs~and \rhs~are random self-reducible, and (ii.) an efficient solution to \rhs~implies an efficient solution to \hs. 
\begin{proposition}\label{prop:RHS-HS-amplify}
  Let $\mathcal{G} = \{ G_i \mid i \in I\}$ be a family of finite
  groups indexed by a set $I \subset \{0,1\}^*$ and
  $\ell: \N \rightarrow \N$ be a length parameter.  Assume there
  exists a QPT $\algo A$ which solves \textsc{Random Hidden Shift}
  over $\mathcal{G}$ (with parameter $\ell(|i|)$) with
  inverse-polynomial completeness. Then there exists a QPT $\algo A'$
  which satisfies all of the following:
  \begin{enumerate}
  \item $\algo A'$ solves \HS~with random $f$ with completeness $1 - \negl(|i|)$;
  \item $\algo A'$ solves \HS~for any injective $f$ with completeness $1 - \negl(|i|)$;
  \item $\algo A'$ solves \RHS~with completeness $1 - \negl(|i|)$.
  \end{enumerate}
\end{proposition}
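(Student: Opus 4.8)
The plan is to prove all three claims at once by combining a random self-reduction (to convert worst-case instances into the average-case distribution that $\algo A$ handles) with a simple repeat-and-verify amplification (to boost the inverse-polynomial success probability to $1-\negl$). Throughout, let $q=\poly(|i|)$ bound the number of oracle queries made by $\algo A$, and write $1-\epsilon(i)\ge 1/p(|i|)$ for its \rhs~completeness.

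First I would build a single-shot randomized reduction $\algo B$ that turns any fixed \emph{injective} instance $(f,f_s)$ into a freshly randomized \rhs~instance. On index $i$, $\algo B$ samples a uniform $r\in G_i$ and a $2q$-wise independent function $\pi$ on $\{0,1\}^{\ell(|i|)}$, simulates the two oracles $f'(x)=\pi(f(x))$ and $\tilde g(x)=\pi(f_s(rx))$, runs $\algo A^{f',\tilde g}(i)$ to obtain a guess $t'$, and outputs $t'r^{-1}$. Two observations make this faithful. First, since $\tilde g(x)=\pi(f(srx))=f'((sr)x)$, the pair $(f',\tilde g)$ is a genuine hidden-shift instance with shift $t=sr$, which is uniform because $r$ is, and recovering $s=tr^{-1}$ is immediate. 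Second, because $f$ is injective, $\pi$ is evaluated only on the $|G_i|$ distinct points of $\operatorname{Im}(f)$, so for a \emph{truly} random $\pi$ the values $f'(x)$ are i.i.d.\ uniform and $f'$ is exactly a uniform random function; hence $(f',\tilde g)$ is distributed \emph{exactly} as an \rhs~instance and $\Pr[\algo B^{f,f_s}(i)=s]=1-\epsilon(i)$, independent of the fixed $f$ and $s$. To make $\algo B$ efficient I would then replace the random $\pi$ by the $2q$-wise independent one: since $\algo A$ probes $\pi$ through at most $q$ oracle queries, \expref{Theorem}{thm:k-wise} guarantees that its output distribution—and hence $\Pr[\algo B\text{ succeeds}]$—is unchanged. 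Thus for every injective instance, $\algo B$ succeeds with probability exactly $1-\epsilon(i)\ge 1/p(|i|)$.

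Next I would amplify by independent repetition with verification. The algorithm $\algo A'$ runs $\algo B$ a total of $k=\Theta(p(|i|)\cdot|i|)$ times with independent randomness, collecting candidate shifts $t'_1,\dots,t'_k$, and tests each $t'_j$ by sampling a random $x\in G_i$ and checking whether $f_s(x)=f(t'_j x)$ via (measured) classical queries; it outputs the first candidate that passes. For injective $f$ this test is \emph{exact}: if $t'_j=s$ it always passes, while if $t'_j\neq s$ then $f(t'_j x)\neq f(sx)=f_s(x)$ for every $x$, so a wrong candidate never passes. Consequently $\algo A'$ returns $s$ whenever at least one of the $k$ runs succeeds, which fails with probability at most $(1-1/p(|i|))^{k}\le e^{-\Omega(|i|)}=\negl(|i|)$. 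This establishes claim (2) for arbitrary injective $f$.

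Finally, claims (1) and (3) follow with little extra work. For a uniformly random $f:G_i\to\{0,1\}^{\ell(|i|)}$, the standing assumption that the range is much larger than $|G_i|$ makes $f$ injective except with negligible probability (a birthday bound); on the injective event the analysis above gives $\min_s\Pr[\algo A'^{f,f_s}(i)=s]\ge 1-\negl(|i|)$, and on the negligible remaining event we simply bound this quantity below by $0$. Averaging over $f$ yields claim (1), and claim (3) is then immediate, since for each $f$ the average over a random shift dominates the worst-case shift, so $\Exp_{f,s}[\Pr[\algo A'^{f,f_s}(i)=s]]\ge\Exp_{f}[\min_s\Pr[\algo A'^{f,f_s}(i)=s]]\ge 1-\negl(|i|)$. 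The main obstacle is the faithfulness-versus-efficiency tension in the self-reduction: a truly random relabeling of the value space is exactly what makes the simulated instance \rhs-distributed, yet it is infeasible to sample, and the crux is that replacing it by a $2q$-wise independent function is invisible to the $q$-query algorithm $\algo A$ by \expref{Theorem}{thm:k-wise}.
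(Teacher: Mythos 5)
Your proposal is correct and takes essentially the same route as the paper's proof: randomize the given instance by composing with a $2k$-wise independent relabeling of the range together with a uniform right-translation, invoke \expref{Theorem}{thm:k-wise} to argue that the bounded-query solver sees exactly a \textsc{Random Hidden Shift} instance, and then amplify by independent repetition with classical verification of the candidate shift. The only substantive divergence is in claim (1), where you pass from random to injective $f$ via a birthday bound (which implicitly needs $2^{\ell} \gg |G_i|^2$ rather than just a superpolynomial range), whereas the paper invokes Zhandry's quantum collision bound to treat a random $f$ as indistinguishable from an injective one; on the other hand your amplification accounting is cleaner and more uniform in $(f,s)$ than the paper's.
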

\begin{proof}
  We are given oracles $f, g$ and a promise that $g = f \circ
  L_s$. For a particular choice of $n$, there is an explicit
  (polynomial-size) bound $k$ on the running time of $\algo A$. Let
  $\mathcal H$ be a $2k$-wise independent function family which maps
  the range of $f$ to itself. The algorithm $\algo A'$ will repeatedly
  execute the following subroutine. First, an element
  $h \in \mathcal H$ and an element $t \in G_i$ are selected
  independently and uniformly at random. Then $\algo A$ is executed
  with oracles
$$
f' := h \circ f 
\qquad \text{and} \qquad
g' := h \circ g \circ L_t\,.
$$
It's easy to see that $g' = f' \circ L_{st}$. If $\algo A$ outputs a group element $r$, $\algo A'$ checks if $g'(x) = f'(rx)$ at a polynomial number of random values $x$. If the check succeeds, $\algo A'$ outputs $rt^{-1}$ and terminates. If the check fails (or if $\algo A$ outputs garbage), we say that the subroutine fails. The subroutine is repeated $m$ times, each time with a fresh $h$ and $t$. 

Continuing with our fixed choice of $f$ and $g$, we now argue that $\algo A$ (when used as above) cannot distinguish between $(f', g')$ and the case where $f'$ is uniformly random, and $g'$ is a uniformly random shift of $f'$. First, the fact that the shift is randomized is clear. Second, if $f$ is injective, then $f'$ is simply $h$ with permuted inputs, and is thus indistinguishable from random (by the $2k$-wise independence of $h$ and \expref{Theorem}{thm:k-wise}). Third, if $f$ is random, then it is indistinguishable from injective (by the collision bound of~\cite{Zhandry15}), and we may thus apply the same argument as in the injective case.


It now follows that, with inverse-polynomial probability $\epsilon$ (over the choice of $h$ and $t$), the instance $(f', g')$ is indistinguishable from an instance $(\varphi, \varphi_{st})$ on which the subroutine succeeds with inverse-polynomial probability $\delta$. After $m$ repetitions of the subroutine, $\algo A'$ will correctly compute the shift $r = st$ with probability at least $(1 - \epsilon \delta)^m \approx e^{-\epsilon \delta m}$, as desired.
\qed
\end{proof}

\paragraph{Decision versus search.}

Next, we consider the relationship between searching for shifts (given the promise that one exists,) and deciding if a shift exists or not. Roughly speaking, we establish that the two problems are equivalent for most group families of interest. We begin with a straightforward reduction from \drhs~to \rhs. 

\begin{proposition}
  If there exists a QPT algorithm for \RHS~on $\mathcal{G}$ with
  completeness $1 - \epsilon(i)$, then there exists a QPT algorithm for \DRHS~on
  $\mathcal{G}$ with completeness $1 - \epsilon(i)$, and negligible
  soundness error.
\end{proposition}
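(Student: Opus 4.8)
The plan is to construct the \drhs~algorithm $\algo B$ by running the hypothesized \rhs~algorithm $\algo A$ and then \emph{verifying} its output against the input oracles. On input $i$ and oracles $f, g$, the algorithm $\algo B$ first runs $\algo A^{f,g}(i)$ to obtain a candidate shift $r \in G_i$. It then samples $t = \poly(|i|)$ points $x_1, \dots, x_t$ uniformly and independently from $G_i$, extracts the classical values $f(r x_j)$ and $g(x_j)$ from the quantum oracles (by querying in the computational basis and measuring), and checks whether $g(x_j) = f(r x_j)$ for every $j$. If all $t$ checks succeed, $\algo B$ declares that the instance is a shift (case (ii)); otherwise it declares that the functions are independent (case (i)). This $\algo B$ is plainly QPT, making only polynomially many more queries than $\algo A$.

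For completeness, suppose we are in the shift case, so that $f$ is random and $g = f \circ L_s$. Then $(f,g)$ is exactly a \rhs~instance, so $\algo A$ outputs $r = s$ with probability at least $1 - \epsilon(i)$. Whenever $r = s$ we have $g(x) = f(sx) = f(rx)$ for every $x$, so all $t$ verification checks pass and $\algo B$ correctly outputs ``shift.'' Hence the completeness error is at most $\epsilon(i)$, giving completeness $1 - \epsilon(i)$ as claimed.

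For soundness, suppose we are in the independent case, so that $f$ and $g$ are independent uniformly random functions into $\bits^{\ell(|i|)}$. Here $\algo B$ errs only if the verification accepts, i.e., if its output $r$ satisfies $g(x_j) = f(r x_j)$ for all $j$. Rather than reason about the (correlated) specific value $r$, I would bound the probability that \emph{some} $r \in G_i$ passes all checks, which dominates the error:
$$
\Pr\bigl[\exists r \in G_i : \forall j,\ g(x_j) = f(r x_j)\bigr] \le \sum_{r \in G_i} \Pr\bigl[\forall j,\ g(x_j) = f(r x_j)\bigr]\,.
$$
Up to the negligible probability $O(t^2/|G_i|)$ that two sampled points coincide, the $x_j$ are distinct, and then for each fixed $r$ the points $r x_1, \dots, r x_t$ are distinct as well; since $f$ and $g$ are independent and uniform, the $t$ equalities are independent events each of probability $2^{-\ell(|i|)}$, so each summand is $2^{-\ell(|i|)\,t}$. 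The union bound thus yields $|G_i| \cdot 2^{-\ell(|i|)\,t}$, and using the standing assumption $2^{\ell(|i|)} \gg |G_i|$ this is at most $2^{-\ell(|i|)(t-1)}$, which is negligible in $|i|$ for any $t \ge 2$.

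The main subtlety, and the reason the soundness step is phrased via a union bound, is that the candidate $r$ is produced by $\algo A$ after it has queried $f$ and $g$, so $r$ is correlated with the oracles and one cannot directly treat the values $g(x_j)$ as fresh and uniform conditioned on $r$. Quantifying over all $r \in G_i$ sidesteps this difficulty completely, since the resulting bound makes no reference to how $r$ was chosen; the only cost is a factor of $|G_i|$, which is absorbed by the large range guaranteed by $2^{\ell(|i|)} \gg |G_i|$. The remaining point, that the verification points be distinct, is routine and handled by the birthday bound above.
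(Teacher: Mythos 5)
Your proof is correct and follows essentially the same route as the paper's: run the \rhs~algorithm, verify the returned candidate shift at polynomially many sampled points, and bound the soundness error by $|G_i|\cdot 2^{-\ell(|i|)\,t}$ — the paper's stated bound is exactly this union bound over all candidate shifts, which you merely make explicit (along with the routine distinctness-of-samples caveat).
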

\begin{proof}
Let $\ell(\cdot)$ be the relevant length parameter. Consider an \rhs~algorithm for $G$ with completeness $1 - \epsilon$ and the following adaptation to \drhs.
\begin{itemize}
  \item Run the \rhs~algorithm.
  \item When the algorithm returns a purported shift $s$, check $s$
    for veracity with a polynomial number of (classical) oracle
    queries to $f$ and $g$ (ensuring that $g(x_i) = f(sx_i)$ for
    $k(n)$ distinct samples $x_1, \ldots, x_{k(n)}$).
\end{itemize}

Observe that if $f$ and $g$ are indeed hidden shifts, this procedure
will determine that with probability $1 - \epsilon$. When $f$ and $g$
are unrelated random functions, the ``testing'' portion of the
algorithm will erroneously succeed with probability no more than
$|G| \cdot 2^{-k \cdot \ell(|i|)}$. Thus, under the assumption that $|G| \geq k(n)$, the resulting \drhs~algorithm has completeness $1 - \epsilon$ and soundness $|G| \cdot 2^{-k \cdot \ell(|i|)}$. For any nontrivial length function $\ell$, this soundness can be driven exponentially close to zero by choosing $k = \log |G| + k'$. 
\qed
\end{proof}

On the other hand, we are only aware of reductions from \rhs~to \drhs~under the additional assumption that $G$ has a ``dense'' tower of subgroups. In that case, an algorithmic approach of \citet{FZ13} can be adapted to provide a reduction. Both $S_n$ and $\Z/2^n$ have such towers.

\begin{proposition}\label{prop:search-to-decision}
Let $\mathcal G$ be either the group family $\{ \Z/{2^n}\}$, or the group family $\{S_n\}$. If there exists a QPT algorithm for \DRHS~on $\mathcal G$ with at most inverse-polynomial completeness and soundness errors, then there exists a QPT algorithm for \RHS~on $\mathcal G$ with negligible completeness error.
\end{proposition}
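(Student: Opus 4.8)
The plan is to implement the tower-based search-to-decision reduction of \citet{FZ13}, instantiated for the two relevant families. The common structure we exploit is a \emph{subgroup series} $\{e\} = H_0 \leq H_1 \leq \cdots \leq H_m = G_i$ in which each $H_j$ again belongs to the family $\mathcal G$ and each index $[H_{j} : H_{j-1}]$ is polynomially bounded. For $\Z/2^n$ we take $H_j$ to be the unique subgroup of order $2^j$, so that $m = n$ and every index equals $2$; for $S_n$ we take $H_j$ to be the pointwise stabilizer of $\{j+1,\dots,n\}$, so that $H_j \cong S_j$, $m = n$, and $[H_j : H_{j-1}] = j$. In both cases the sum of the indices is polynomial and each $H_j$ lies in $\mathcal G$, so restricting an instance to $H_j$ produces a legitimate instance of the same problem on a smaller member of the family.

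The engine of the reduction is a single \emph{coset test} built from the decision oracle. Given a hidden-shift instance $(f,g)$ on $G$ with $g = f\circ L_s$ and a subgroup $H \leq G$, we restrict both functions to $H$. If $s \in H$ then $sH = H$ and $(f|_H, g|_H)$ is a genuine shift instance on $H$ with shift $s$; if $s \notin H$ then $g|_H$ reads the values of $f$ on the \emph{disjoint} left coset $sH$, so for random $f$ the pair $(f|_H,g|_H)$ is exactly two independent random functions on $H$, i.e.\ a non-shift instance. Hence the \DRHS\ oracle, applied to the restricted pair, decides the membership $s \in H$. To locate $s$ among the left cosets of $H_{j-1}$ inside $H_j$ we precompose $f$ with $L_\tau$ for each coset representative $\tau$: since $f\circ L_\tau$ is again uniformly random and $g = (f\circ L_\tau)\circ L_{\tau^{-1}s}$, testing $\tau^{-1}s \in H_{j-1}$ via the coset test identifies the coset containing $s$. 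We run the algorithm top-down: starting from $H_m = G$ we find the coset of $H_{m-1}$ holding $s$, replace $f$ by $f\circ L_{\tau}$ for the winning representative (reducing to a shift lying in $H_{m-1}$), and recurse inside $H_{m-1}$, which is again a smaller group of the family, until the shift is pinned down in $H_0 = \{e\}$ and hence fully recovered.

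Because the reduction makes polynomially many \emph{adaptive} decision calls, and a single incorrect coset determination corrupts the entire remaining recursion, the crux is error control: we must turn the given inverse-polynomial completeness and soundness into a decision oracle that is correct with overwhelming probability on every call. We handle this by amplifying the \DRHS\ oracle exactly as in \expref{Proposition}{prop:RHS-HS-amplify}. Using random self-reducibility, we re-randomize any restricted instance $(f',g')$ as $(h\circ f', \, h\circ g'\circ L_t)$ for a freshly drawn $2k$-wise independent relabeling $h$ (where $k$ bounds the oracle's query count) and a uniform shift $t$; this sends a shift instance to a uniformly random shift instance (with shift $st$) and a non-shift instance to a uniformly random non-shift instance. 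By \expref{Theorem}{thm:k-wise}, the re-randomized pair is indistinguishable to the oracle from a truly uniform instance of the appropriate type, so the oracle errs on it only with its average (inverse-polynomial, hence below $1/2$) probability; independent re-randomizations followed by a majority vote then drive the per-call error to negligible, uniformly over all inputs. A union bound over the polynomially many calls keeps the total completeness error negligible, as desired.

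The main obstacle, and the step deserving the most care, is precisely this amplification-under-adaptivity: the decision oracle's guarantee is only average-case, yet the coset tests feed it instances produced adaptively from a single fixed $f$, so naive repetition on a fixed instance need not reduce the error. The re-randomization above is what resolves this, by forcing every instance actually submitted to the oracle to be distributed as a uniform member of $\mathcal G$ (up to the query-indistinguishability of \expref{Theorem}{thm:k-wise}, which as in \expref{Proposition}{prop:RHS-HS-amplify} passes through the injective case via a collision bound). A secondary technical point is to verify that restriction to each $H_j$ genuinely lands in the family $\mathcal G$ with the correct input distribution---immediate for the two towers above---so that the amplified oracle's guarantee applies at every level of the recursion.
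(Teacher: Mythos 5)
Your reduction is essentially the paper's: the same subgroup towers ($\{1\} < \Z/2 < \Z/2^2 < \cdots$ and $\{1\} < S_1 < S_2 < \cdots$), the same coset test in which the \DRHS~oracle is run on the pair restricted to the subgroup after translating one of the functions by a coset representative, and the same recursive descent with the shift reassembled as a product of the winning representatives; your choice to translate $f$ by $L_\tau$ rather than the paper's $\check{g}\colon x \mapsto g(\alpha x)$ is an immaterial left/right mirror. The one substantive difference is the error accounting. The paper runs the given \DRHS~oracle once per coset and simply sums the errors, arriving at the bound \eqref{eq:RHS-to-DRHS-error}, namely $\sum_t [G^{(t+1)}:G^{(t)}]\,\delta_{G^{(t)}} + \sum_t \epsilon_{G^{(t)}}$; with only inverse-polynomial $\epsilon,\delta$ this is inverse-polynomial rather than negligible, so the proof as written does not quite deliver the negligible completeness error claimed in the statement (in its actual use, \expref{Theorem}{thm:HS-to-EMD}, the \drhs~oracle is amplified to negligible error \emph{before} the proposition is invoked). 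Your per-call amplification---re-randomizing each restricted instance with a fresh $2k$-wise independent relabeling and a fresh translate, exactly as in \expref{Proposition}{prop:RHS-HS-amplify}, and taking a majority vote---closes that gap and correctly addresses the adaptivity of the polynomially many calls, which you rightly identify as the crux. So your write-up follows the paper's route but is, on this point, somewhat more careful than the paper's own proof.
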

\begin{proof}
  The proof adapts techniques of \cite{FZ13} to our
  probabilistic setting, and relies on the fact that these group
  families have an efficient subgroup tower. Specifically, each $G_i$ possesses
  a subgroup series
  $\{1\} = G^{(0)} < G^{(1)} < G^{(2)} < \cdots < G^{(s)} = G_i$ for which (i.)
  uniformly random sampling and membership in $G^{(t)}$ can be
  performed efficiently for all $t$, and (ii.) for all $t$, there is
  an efficient algorithm for producing a left transversal of
  $G^{(t-1)}$ in $G^{(t)}$. For $\Z/{2^n}$, the subgroup series is
  $\{1\} < \Z/2 < \Z/{2^2} < \Z/{2^3} < \cdots$.
For $S_n$ (i.e., the group of permutations of $n$ letters), the subgroup series is $\{1\} < S_1 < S_2 < S_3 < \cdots$, where each step of the series adds a new letter. We remark that such series can be efficiently computed for general permutation groups using a \emph{strong generating set}, which can be efficiently computed from a presentation of the group in terms of generating permutations~\cite{FHL80}.

We recursively define a \rhs~algorithm by considering the case of a group $G$ with a subgroup $H$ of polynomial index with a known left
transversal $A = \{a_1, \ldots, a_k\}$ (so that $G$ is the disjoint
union of the $a_i H$). Assume that the \drhs~algorithm for $H$ has soundness $\delta_H$ and completeness $1-\epsilon_H$. In this case, the algorithm (for $G$) may proceed as
follows:

\begin{enumerate}
\item For each $\alpha \in A$, run the \drhs~algorithm on the two functions $f$ and $\check{g}: x \mapsto g(\alpha x)$ restricted to the subgroup $H$.
\item If exactly one of these recursive calls reports that the function $f$ and $x \mapsto g(\alpha x)$ are hidden shifts, recursively apply the \rhs~algorithm to recover the hidden shift $s'$ (so that $f(x) = g(\alpha s' x)$ for $x \in H$). Return the shift $s = \alpha s'$.
\item Otherwise assert that the functions are unrelated random functions.
\end{enumerate}
In the case that $f$ and $g$ are independent random functions, the algorithm above errs with probability no more than $[G:H] \delta_{H}$. 

Consider instead the case that $f: G \rightarrow S$ is a random
function and $g(x) = f(sx)$ for an element $s \in G$. Observe that if
$s^{-1} \in \alpha_i H$, so that $s^{-1} = \alpha_i h_s$ for an
element $h_s \in H$, we have $g(\alpha_i h_s x) = f(x)$. It follows
that $f$ and $\check{g}: x \mapsto g(\alpha_i x)$ are shifts of each
other; in particular, this is true when restricted to the subgroup
$H$. Moreover, the hidden shift $s$ can be determined directly from
the hidden shift between $f$ and $\check{g}$. Note that, as above, the
probability that any of the recursive calls to \drhs~are answered
incorrectly is no more than $[G:H] \delta_{H} + \epsilon_{H}$. 

It remains to analyze the completeness of the resulting recursive \rhs~algorithm: in the case of the subgroup chain above, let $\gamma_t$ denote the completeness of the resulting \rhs~algorithm on $G^{(t+1)}$ and note that
\[
\gamma_{t+1} \leq [G^{(t+1)}:G^{(t)}] \delta_{G^{(t)}} + \epsilon_{G^{(t)}} + \gamma_{t}
\]
and thus that the resulting error on $G$ is no more than
\begin{equation}\label{eq:RHS-to-DRHS-error}
\sum_t [G^{(t+1)}:G^{(t)}] \delta_{G^{(t)}} + \sum_t \epsilon_{G^{(t)}}\,.
\end{equation}
As mentioned above, both the group families $\{\Z/2^n \mid n \geq 0\}$ and $\{ S_n \mid n \geq 0\}$ satisfy this subgroup chain property.\qed
\end{proof}

\paragraph{Remark.} Note that the groups $\Z/N$ for general $N$ are not treated by the results above; indeed, when $N$ is prime, there is no nontrivial tower of subgroups. (Such groups do have other relevant self-reducibility and amplification properties~\cite{HN04}.) We remark, however, that a generalization of the \textsc{Hidden Shift} problem which permits \emph{approximate equality} results in  a tight relationship between \textsc{Hidden Shift} problems for different cyclic groups. In particular, consider the \textsc{$\delta$-Approximate Hidden Shift} problem given by two functions $f, g: G \rightarrow S$ with the promise that there exists an element $s \in G$ so that $\Pr_x[g(x) = f(sx)] \geq 1 - \delta$ (where $x$ is chosen uniformly in $G$); the problem is to identify an element $s' \in G$ with this property. Note that $s'$ may not be unique in this case.

In particular, consider an instance $f, g: \Z/n \rightarrow V$ of a
\HS\ problem on a cyclic group $\Z/n$. We wish to ``lift'' this
instance to a group $\Z/m$ for $m \gg n$ in such a way that a solution
to the $\Z/m$ instance yields a solution to the $\Z/n$ instance. For a
function $\phi: \Z/n \rightarrow V$, define the function
$\hat{\phi}: \Z/m \rightarrow V$ by the rule $\hat{\phi}(x) = \phi(x \bmod n)$.
Note, then, that $\Pr_x[ \hat{f}(x) = \hat{g}(\hat{s} + x) ] \geq 1 - n/m$ for the shift $\hat{s} = s$; moreover, recovering \emph{any} shift for the $\Z/m$ problem which achieves equality with probability near $1 - n/m$ yields a solution to the $\Z/n$ problem (by taking the answer modulo $n$, perhaps after correcting for the $m \bmod n$ overhang at the end of the $\Z/m$ oracle). Note that this function is not injective.

We remark that the {\HS} problem for non-injective Boolean functions (i.e., with range $\Z/2$) sometimes admits efficient algorithms (see, e.g.,~\cite{ORR13, Roetteler16}). Whether these techniques can be extended to the general setting above is an interesting open problem.

\subsection{Selecting hard groups}\label{sec:groups}

\subsubsection{Efficiently solvable cases.}

For some choices of underlying group $G$, some of the above problems admit polynomial-time algorithms. A notable case is the \HSP~on $G = \Z$, which can be solved efficiently by Shor's algorithm~\cite{Shor94}. The \hsp~with arbitrary abelian $G$ also admits a polynomial-time algorithm~\cite{Kitaev95}. The earliest and simplest example was Simon's algorithm~\cite{Simon97}, which efficiently solves the \hsp~in the case $G = (\Z/2)^n$ and $H = \{1, s\}$ for unknown $s$, with only $O(n)$ queries to the oracle. Due to the fact that $(\Z/2)^n \rtimes \Z/2 \cong (\Z/2)^{n+1}$, Simon's algorithm also solves the \HS~problem on $(\Z/2)^n$. Additionally, \citet{FIMSS14} have given efficient (or quasi-polynomial) algorithms for hidden shifts over solvable groups of constant exponent; for example, their techniques yield efficient algorithms for the groups $(\Z/p)^n$ (for constant $p$) and $(S_4)^n$.

\subsubsection{Cyclic groups.}

In contrast with the \HSP, the general abelian \HS~is believed to be hard. The only nontrivial algorithm known is due to Kuperberg, who gave a subexponential-time algorithm for the \hsp~on dihedral groups~\cite{Kuperberg05}. He also gave a generalization to the abelian \HS~problem, as follows.
\begin{theorem} \emph{(Theorem 7.1 in~\cite{Kuperberg05})}
The abelian \HS~problem has a quantum algorithm with time and query complexity $2^{O((\log |G|)^{1/2})}$, uniformly for all finitely-generated abelian groups.
\end{theorem}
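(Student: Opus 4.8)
The plan is to prove this as an algorithmic upper bound, so I would exhibit and analyze a quantum algorithm—Kuperberg's sieve—rather than argue by reduction. The first step is a reduction to \emph{phase states} and to the cyclic case. By the structure theorem for finitely generated abelian groups, write $G \cong \Z^a \times \prod_j \Z/N_j$; since it suffices to recover each coordinate of the shift $s$ separately (the characters of $G$ factor across this decomposition, and each free $\Z$ summand is handled by working modulo a large enough power), I reduce to $G = \Z/N$. Given oracles for $f$ and $g = f \circ L_s$, form the combined function $F(x,b)$ equal to $f(x)$ for $b=0$ and to $g(x)$ for $b=1$, prepare a uniform superposition over $\Z/N \times \bits$, query it, and measure the value register. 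For injective $f$ each measured value leaves a two-term state $\tfrac{1}{\sqrt2}(\ket{x_0,0} + \ket{x_0 - s,1})$; applying the QFT on the $\Z/N$ register and measuring then yields a uniformly random label $k \in \Z/N$ together with the single-qubit phase state $\ket{\psi_k} \propto \ket{0} + \omega^{ks}\ket{1}$, where $\omega = e^{2\pi i/N}$. Thus one query produces one phase state with a known uniform label, and all of $s$ is encoded in the phases $\omega^{ks}$.

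Next I would establish the combination primitive. Taking $\ket{\psi_{k_1}} \otimes \ket{\psi_{k_2}}$, applying a CNOT, and measuring the target qubit yields, each with probability $\tfrac12$, a single phase state with label $k_1 + k_2$ (outcome $0$) or $k_1 - k_2$ (outcome $1$). The objective is to synthesize the distinguished label $N/2$ (taking $N$ a power of $2$ for concreteness), since $\ket{\psi_{N/2}} \propto \ket{0} + (-1)^{s}\ket{1}$: measuring this qubit in the Hadamard basis reveals the least significant bit of $s$. The remaining bits follow by a standard bootstrapping argument—peel off the known low-order bits of $s$ and rerun on the quotient—so the whole problem reduces to repeatedly producing the label $N/2$.

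The sieve then produces that label subexponentially. Write $n = \log N$. I would start from a pool of $M = 2^{O(\sqrt n)}$ phase states with independent uniform labels and proceed in roughly $\sqrt n$ rounds; in each round I bucket the surviving states by their lowest $b = \Theta(\sqrt n)$ not-yet-cleared bits and combine colliding pairs so as to zero out those bits, producing labels divisible by a higher power of $2$. A pigeonhole/collision count shows that, provided the pool has size $\gtrsim 2^{b}$, a constant fraction of states survives each round, so after all $\sim\sqrt n$ rounds at least one state carries the label $N/2$. Balancing the per-round clearing $b$ against the $2^{b}$ states needed to force collisions fixes $b = \Theta(\sqrt n)$ and gives total query and time complexity $M \cdot \poly(n) = 2^{O(\sqrt n)} = 2^{O(\sqrt{\log|G|})}$; uniformity over all finitely generated abelian $G$ follows because both the reduction and the sieve parameters depend only on $\log|G|$.

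The delicate part—and the main obstacle—is the rigorous analysis of the sieve. Each combination is probabilistic (it succeeds with probability $\tfrac12$ and consumes two states to produce one) and it reshapes the label distribution, so one must argue that the surviving labels remain sufficiently uniform and independent for the collision count to apply round after round, and that the pool does not deplete before the target label is produced. It is precisely this bookkeeping that forces the $\sqrt n$-versus-$2^{\sqrt n}$ balance and hence the subexponential bound. Handling arbitrary factor structure uniformly—in particular the free $\Z$ summands and non-power-of-two moduli—requires additional care in the first step but introduces no new asymptotic cost.
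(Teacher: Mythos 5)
This statement is not proved in the paper at all: it is imported verbatim as Theorem~7.1 of Kuperberg's work, so the only meaningful comparison is with Kuperberg's own argument. Your sketch is a recognizable and essentially faithful outline of that argument --- the phase states $\ket{0}+\omega^{ks}\ket{1}$ obtained by querying the combined oracle, measuring the value register, and Fourier sampling; the CNOT-and-measure combination producing label $k_1\pm k_2$ with probability $\tfrac12$ each; the $\sqrt{n}$-round sieve clearing $\Theta(\sqrt n)$ low-order bits per round from a pool of $2^{O(\sqrt n)}$ states; extraction of $s \bmod 2$ from $\ket{\psi_{N/2}}$ and recursion on $\Z/2^{n-1}$. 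You are also right to identify the survival/uniformity bookkeeping of the sieve as the genuinely delicate step; as written your ``constant fraction survives each round'' is an assertion, not a proof, but for a cited theorem that is a reasonable level of detail.

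One step, however, does not work as stated: the reduction ``to $G=\Z/N$, recovering each coordinate of the shift separately.'' If $G = \prod_j \Z/N_j$ and you try to isolate $s_1$ by restricting to a line $\{(x_1,c_2,\dots,c_m)\}$, the function of which $\tilde g(x_1) = g(x_1,c_2,\dots,c_m)$ is a shift by $s_1$ is $\tilde f(x_1) = f(x_1, c_2+s_2,\dots,c_m+s_m)$ --- a restriction of $f$ to a coset you cannot identify without already knowing $s_2,\dots,s_m$, so you cannot supply both oracles of a cyclic instance. The correct move (and what Kuperberg does) is to skip this reduction entirely: the phase-state construction applies to the whole group at once, producing qubits $\ket{0}+\chi(s)\ket{1}$ labeled by uniformly random characters $\chi \in \hat G$, and the sieve is run on these labels directly, targeting characters supported on a single factor. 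This also makes the handling of non-power-of-two moduli (clear digits rather than bits) and of the free $\Z$ summands part of the same uniform sieve rather than ``additional care in the first step.'' With that correction your outline matches the cited proof.
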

Regev and Kuperberg later improved the above algorithm (so it uses polynomial quantum space, and gains various knobs for tuning complexity parameters), but the time and query complexity remains the same~\cite{Kuperberg11, Regev04-dhsp}.

There is also evidence connecting \hsp~on the dihedral group $D_N$ (and hence also \hs~on $\Z/N$) to other hard problems. Regev showed that, if there exists an efficient quantum algorithm for the dihedral \hsp~which uses coset sampling (the only nontrivial technique known), then there's an efficient quantum algorithm for $\poly(n)$-unique-SVP~\cite{Regev04-lattice}. This problem, in turn, is the basis of several lattice-based cryptosystems. However, due to the costs incurred in the reduction, Kuperberg's algorithm only yields exponential-time attacks. An efficient solution to \hs~on $\Z/N$ could also be used to break a certain isogeny-based cryptosystem~\cite{CJS14}.

We will focus particularly on the case $\Z/{2^n}$. This is the simplest group for which all of our constructions and results apply. Moreover, basic computational tasks (encoding/decoding group elements as bitstrings, sampling uniformly random group elements, performing basic group operations, etc.) all have straightforward and extremely efficient implementations over $\Z/{2^n}$. The existence of a quantum attack with complexity $2^{O(\sqrt{n})}$ in this case will only become practically relevant in the very long term, when the costs of quantum and classical computations become somewhat comparable. If such attacks are truly a concern, then there are other natural group choices, as we discuss below.

\subsubsection{Permutation groups.}

In the search for quantum algorithms for \hsp~and \hs, arguably the most-studied group family is the family of symmetric groups $S_n$. It is well-known that an algorithm for \hsp~over $S_n \wr (\Z/2)$ would yield a polynomial-time quantum algorithm for Graph Isomorphism. As discussed in \expref{Appendix}{appendix:HSP}, this is precisely the case of \hsp~relevant to the \HS~problem over $S_n$.

For these groups, the efforts of the quantum algorithms community have so far amounted only to negative results. First, it was shown that the standard Shor-type approach of computing with individual ``coset states'' cannot succeed~\cite{MRS08}. In fact, entangled measurements over $\Omega(n \log n)$ coset states are needed~\cite{HMRRS10}, matching the information-theoretic upper bound~\cite{EHK04}. Finally, the only nontrivial technique for performing entangled measurements over multiple registers, the so-called Kuperberg sieve, is doomed to fail as well~\cite{MRS07}.

While encoding, decoding, and computing over the symmetric groups is more complicated and less efficient than the cyclic case, it is a well-understood subject (see, e.g.,~\cite{FHL80}). When discussing these groups below, we will assume (without explicit mention) an efficient solution to these problems.

\subsubsection{Matrix groups.}

Another relevant family of groups are the matrix groups $\GL_2(\F_q)$ and $\SL_2(\F_q)$ over finite fields. These nonabelian groups exhibit many structural features which are similar to the symmetric groups, such as high-dimensional irreducible representations. 
Many of the negative results concerning the symmetric groups also carry over to matrix groups~\cite{HMRRS10, MRS07}.

Efficient encoding, decoding, and computation over finite fields $\F_q$ is standard. Given these ingredients, extending to matrix groups is not complicated. In the case of $\GL_2(\F_q)$, we can encode an arbitrary pair (not both zero) $(a, c) \in \F_q^2$ in the first column, and any pair $(b, d)$ which is not a multiple of $(a, c)$ in the second column. For $\SL_2(\F_q)$, we simply have the additional constraint that $d$ is fixed to $a^{-1}(1+bc)$ by the choices of $a, b, c$.

\subsubsection{Product groups.} 

Arguably the simplest group family for which the negative results of~\cite{HMRRS10} apply, are certain $n$-fold product groups. These are groups of the form $G^n$  where $G$ is a fixed, constant-size group (e.g., $S_5$). This opens up the possibility of simply replacing the XOR operation (i.e., $\Z/2$ addition) with composition in some other constant-size group (e.g., $S_5$), and retaining the same $n$-fold product structure.

Some care is needed, however, because there do exist nontrivial algorithms in this case. When the base group $G$ is solvable, then there are efficient algorithms for both \hsp~and \HS~(see Theorem 4.17 in \cite{FIMSS14}). It is important to note that this efficient algorithm applies even to some groups (e.g., $(S_4)^n$) for which the negative results of~\cite{HMRRS10} also apply. Nevertheless, solvability seems crucial for~\cite{FIMSS14}, and choosing $G = S_5$ for the base group gives a family for which no nontrivial \HS~algorithms are known. We remark that there is however a $2^{O(\sqrt{n \log n})}$-time algorithm for order-2 {\HSP}\textsc{s} on $G^n$ based on Kuperberg's sieve~\cite{AMR07}; this suggests the possibility of subexponential (i.e., $2^{O(n^\delta)}$ for $\delta < 1$) algorithms for \HS\ over these groups.

\section{Hidden Shift Even-Mansour ciphers}\label{sec:even-mansour}

We now address the question of repairing classical symmetric-key schemes which are vulnerable to Simon's algorithm. We begin with the simplest construction, the so-called Even-Mansour cipher~\cite{EM97}.

\subsection{Generalizing the Even-Mansour scheme}

\paragraph{The standard scheme.}

The Even-Mansour construction turns a publicly known, random permutation $P : \bits^n \rightarrow \bits^n$ into a keyed, pseudorandom permutation 
\begin{align*}
E^P_{k_1, k_2} : \bits^n &\longrightarrow \bits^n\\
x &\longmapsto P(x \oplus k_1) \oplus k_2
\end{align*}
where $k_1, k_2 \in \bits^n$, and $\oplus$ denotes bitwise XOR\@. This scheme is relevant in two settings:
\begin{enumerate}
\item simply as a source of pseudorandomness; in this setting, oracle access to $P$ is provided to all parties. 
\item as a block cipher; now oracle access to both $P$ and $P^{-1}$ is provided to all parties. Access to $P^{-1}$ is required for decryption. One can then ask if $E^P$ is a PRP (adversary gets access to $E^P$), or a strong PRP (adversary gets access to both $E^P$ and its inverse).
\end{enumerate}
In all of these settings, Even-Mansour is known to be information-theoretically secure against classical adversaries making at most polynomially-many queries~\cite{EM97}.

\paragraph{Quantum chosen plaintext attacks on the standard scheme.}
The proofs of classical security of Even-Mansour carry over immediately to the setting of quantum adversaries with only classical access to the relevant oracles. However, if an adversary is granted quantum oracle access to the $P$ and $E^P$ oracles, but no access at all to the inverse oracles, then Even-Mansour is easily broken. This attack was first described in~\cite{KM12}; a complete analysis is given in~\cite{KLLN16}. The attack is simple: First, one uses the quantum oracles for $P$ and $E^P$ to create a quantum oracle for $P \oplus E^P$, i.e., the function
$$
f(x) = P(x) \oplus  P(x \oplus k_1) \oplus k_2\,.
$$
One then runs Simon's algorithm~\cite{Simon97} on the function $f$. The claim is that, with high probability, Simon's algorithm will output $k_1$. To see this, note that $f$ satisfies half of Simon's promise, namely $f(x \oplus k_1) = f(x)$. Moreover, if it is classically secure, then it \emph{almost} satisfies the entire promise. More precisely, for any fixed $P$ and random pair $(x, y)$, either the probability of a collision $f(x) = f(y)$ is low enough for Simon's algorithm to succeed, or there are so many collisions that there exists a classical attack~\cite{KLLN16}. Once we have recovered $k_1$, we also immediately recover $k_2$ with a classical query, since $k_2 = E^P_{k_1, k_2}(x) \oplus P(x \oplus k_1)$ for any $x$.

\paragraph{Hidden Shift Even-Mansour.}

To address the above attack, we propose simple variants of the Even-Mansour scheme. The construction generalizes the standard Even-Mansour scheme in the manner described in \expref{Section}{sec:intro}. Each variant is parameterized by a family of exponentially-large finite groups $G$. The general construction is straightforward to describe. We begin with a public permutation $P : G \rightarrow G$, and from it construct a family of keyed permutations 
$$
E^P_{k_1, k_2} (x) = P(x \cdot k_1) \cdot k_2\,,
$$
where $k_1, k_2$ are now uniformly random elements of $G$, and $\cdot$ denotes composition in $G$. The formal definition, as a block cipher, follows.

\begin{scheme}[Hidden Shift Even-Mansour block cipher] Let $\mathcal G$ be a family of finite, exponentially large groups, satisfying the efficient encoding conditions given in \expref{Section}{sec:groups}. The scheme consists of three polynomial-time algorithms, parameterized by a permutation $P$ of the elements of a group $G$ in $\mathcal G$:
\begin{itemize}
\item $\KeyGen : \N \rightarrow G \times G$; on input $|G|$, outputs $(k_1, k_2) \inrand G \times G$;
\item $\Enc_{k_1, k_2}^P : G \rightarrow G$; defined by $m \mapsto P(m \cdot k_1) \cdot k_2$;
\item $\Dec_{k_1, k_2}^P : G \rightarrow G$; defined by $c \mapsto P^{-1}(c \cdot k_2^{-1}) \cdot k_1^{-1}$.
\end{itemize}
\end{scheme}

For simplicity of notation, we set $E^P_{k_1, k_2} := \Enc_{k_1, k_2}^P$. Note that $\Dec_{k_1, k_2}^P = \bigl(E^P_{k_1, k_2}\bigr)^{-1}$. Correctness of the scheme is immediate; in the next section, we present several arguments for its security in various settings. All of these arguments are based on the conjectured hardness of certain \HS~problems over $\mathcal G$.

\subsection{Security reductions}

We consider two settings. In the first, the adversary is given oracle access to the permutation $P$, and then asked to distinguish the Even-Mansour cipher $E^P_{k_1, k_2}$ from a random permutation unrelated to $P$. In the second setting, the adversary is given oracle access to $P$, $P^{-1}$, as well as $E^P_{k_1, k_2}$ and its inverse; the goal in this case is to recover the key $(k_1, k_2)$ (or some part thereof).

\subsubsection{Distinguishability from random.}\label{sec:reductions-random}

We begin with the first setting described above. We fix a group $G$, and let $\mathcal P_G$ denote the family of all permutations of $G$. Select a uniformly random $P \in \mathcal P_G$. The encryption map for the Hidden Shift Even-Mansour scheme over $G$ can be written as
$$
E^P_{k_1, k_2} = L_{k_2} \circ P \circ L_{k_1}\,.
$$
If we have oracle access to $P$, then this is clearly an efficiently computable subfamily of $\mathcal P_G$, indexed by key-pairs. For pseudorandomness, the relevant problem is then to distinguish $E^P$ from a random permutation which is unrelated to the oracle $P$.

\begin{problem}[Even-Mansour Distinguishability (\emd)]
Given oracle access to permutations $P, Q \in \mathcal P_G$ and a promise that either (i.) both $P$ and $Q$ are random, or (ii.) $P$ is random and $Q = E^P_{k_1, k_2}$ for random $k_1, k_2$, decide which is the case.
\end{problem}

It is straightforward to connect this problem to the decisional version of \RHS, as follows.

\begin{proposition}\label{prop:DRHS-to-EMD}
If there exists a QPT $\algo D$ for \emd~on $\mathcal G$, then there exists a QPT algorithm for the \drhs~problem on $\mathcal G$, with soundness and completeness at most negligibly different from those of $\algo D$.
\end{proposition}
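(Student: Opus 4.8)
The plan is to reduce \drhs~to \emd: given a \drhs~instance $(f,g)$ with $f,g : G \rightarrow \{0,1\}^{\ell(|i|)}$, I would simulate a pair of permutation oracles $(P,Q)$ of $G$ for the assumed \emd~distinguisher $\algo D$, arranged so that the ``hidden shift'' case of \drhs~is carried to case (ii) of \emd~and the ``independent random'' case is carried to case (i); the \drhs~instance is then answered according to $\algo D$'s verdict. Let $k = \poly(|i|)$ bound the number of oracle queries of $\algo D$. First I would sample a $2k$-wise independent function $h : \{0,1\}^{\ell(|i|)} \rightarrow G$ together with a uniformly random element $k_2 \in G$, and set the simulated oracles to be
\[
P := h \circ f \qquad\text{and}\qquad Q := L_{k_2} \circ h \circ g\,.
\]
Both are efficiently computable functions $G \rightarrow G$ given quantum oracle access to $f,g$ (construct the composed oracles by the usual query--compute--uncompute trick, costing $O(1)$ queries to $f$ or $g$ per query to $P$ or $Q$).

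The point of this construction is the algebra of the shift case: when $g = f \circ L_s$, we obtain
\[
Q = L_{k_2} \circ h \circ f \circ L_s = L_{k_2} \circ P \circ L_s = E^P_{s,\, k_2}\,,
\]
so $(P,Q)$ has exactly the Even--Mansour form with key $(k_1,k_2)=(s,k_2)$, where $s$ is uniform (matching the randomized shift of \rhs, hence of \drhs) and $k_2$ is uniform by our own choice and independent of everything else. In the independent case, $f$ and $g$ are independent, so $P$ and $Q$ are built from $h$ on essentially disjoint parts of its domain.

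What remains, and what I expect to be the crux, is to verify that these simulated oracles are indistinguishable, to $\algo D$, from genuine \emd~instances. For the range conversion I would argue exactly as in the proof of \expref{Proposition}{prop:RHS-HS-amplify}: $f$ is indistinguishable from injective by the collision bound of \cite{Zhandry15}, and for injective $f$ the composition $h \circ f$ is itself $2k$-wise independent, hence indistinguishable from a uniformly random function $G \rightarrow G$ by \expref{Theorem}{thm:k-wise}; a random function on the exponentially large set $G$ is in turn indistinguishable from a random permutation (again by the collision bound). Chaining these, $P = h \circ f$ is negligibly close to a uniformly random permutation $\pi$ of $G$, and since $Q$ is a fixed efficiently computable function of $P$ and the independent uniform randomness $(s,k_2)$, the pair $(P,Q)$ is negligibly close to $(\pi, E^\pi_{s, k_2})$, i.e.\ to a true \emd~case-(ii) instance. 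In the independent case the same steps show each of $P,Q$ is close to a random permutation; the additional point, also handled by the collision bound, is that the polynomially many relevant evaluation points of $h$ drawn from $\mathrm{im}(f)$ and $\mathrm{im}(g)$ are distinct with overwhelming probability (equivalently, $f$ and $g$ may be replaced by independent injections with disjoint ranges), so $h$ acts independently on the two and the pair is negligibly close to two independent random permutations, i.e.\ an \emd~case-(i) instance.

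Matching verdicts then yields the claim: identifying $\algo D$'s ``case (ii)'' output with ``hidden shift'' and ``case (i)'' with ``independent,'' the resulting \drhs~algorithm inherits the completeness and soundness of $\algo D$ up to the cumulative (negligible) statistical error introduced by the hybrid replacements above. The main obstacle is precisely this indistinguishability bookkeeping --- in particular, ensuring that sharing a single $h$ between $P$ and $Q$ does not introduce spurious correlations in the independent case --- rather than any difficulty in the algebraic construction, which is immediate.
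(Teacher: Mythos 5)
Your proposal is correct and is essentially the paper's own reduction: feed the \drhs~oracles to the \emd~distinguisher after composing with left-multiplications so that the shift case becomes an Even--Mansour instance, and invoke Zhandry's collision bound \cite{Zhandry15} to pass between random functions and random permutations. The only differences are that you insert a $2k$-wise independent hash $h$ to convert the $\{0,1\}^{\ell}$-valued \drhs~oracles into $G$-valued ones (a point the paper's proof silently elides by treating $f,g$ as maps $G \to G$) and that you rely on the uniformity of the shift $s$ in \drhs~where the paper additionally re-randomizes the first key by an extra $L_{t_1}$; both are harmless.
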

\begin{proof}
Let $f, g$ be the two oracle functions for the D\rhs~problem over $G$. We know that $f$ is a random function from $G$ to $G$, and we must decide if $g$ is also random, or simply a shift of $f$. We sample $t_1, t_2$ uniformly at random from $G$, and provide $\algo D$ with oracles $f$ (in place of $P$), and $g' := L_{t_2} \circ g \circ L_{t_1}$ (in place of $E^P$). We then simply output what $\algo D$ outputs. Note that $(f, g)$ are uniformly random permutations if and only if $(f, g')$ are. In addition, $g = f \circ L_s$ if and only if $g' = L_{t_2} \circ f \circ L_{st_1}$. It follows that the input distribution to $\algo D$ is as in EMD, modulo the fact that the oracles in \drhs~are random functions rather than random permutations. The error resulting from this is at most negligible, by the collision-finding bound of Zhandry~\cite{Zhandry15}. \qed
\end{proof}

Next, we want to amplify the \drhs~distinguisher, and then apply the reduction from \HS~given in \expref{Proposition}{prop:search-to-decision}. Combining this with \expref{Proposition}{prop:DRHS-to-EMD}, we arrive at a complete security reduction.

\begin{theorem}\label{thm:HS-to-EMD}
Let $\mathcal G$ be either the $\Z/{2^n}$ group family or the $S_n$ group family. Under \expref{Assumption}{core-assumption}, the Hidden Shift Even-Mansour cipher over $\mathcal G$ is a quantum-secure pseudorandom function. 
\end{theorem}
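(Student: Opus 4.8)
The plan is to prove the contrapositive by chaining together the reductions established above: from a putative qPRF-distinguisher I will manufacture an efficient algorithm solving \HS~with random $f$ whose completeness is $1-\negl$, contradicting \expref{Assumption}{core-assumption}. Concretely, suppose $E^P$ is \emph{not} a qPRF. Then by \expref{Definition}{def:qPRF} there is a QPT $\algo D$, with access to the public oracle $P$, whose advantage $\adv^{\algo D}$ in distinguishing $E^P_{k_1,k_2}$ from a uniformly random function $G \to G$ is non-negligible, i.e.\ at least $1/\poly(n)$ for infinitely many $n$. My first move is to pass from ``random function'' to ``random permutation'': since $\algo D$ makes only polynomially many queries, a uniformly random function and a uniformly random permutation of $G$ are indistinguishable up to a negligible quantity by the collision bound of~\cite{Zhandry15}. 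Hence $\algo D$ distinguishes $E^P$ from a random $Q\in\mathcal P_G$ with advantage still at least $1/\poly - \negl$; that is, $\algo D$ solves \emd~with non-negligible advantage.

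I would then feed this through the established pipeline. \expref{Proposition}{prop:DRHS-to-EMD} converts an \emd~solver into a \drhs~solver with essentially the same completeness and soundness. Next, \expref{Proposition}{prop:search-to-decision}---which applies precisely because $\mathcal G$ is $\{\Z/2^n\}$ or $\{S_n\}$, the families equipped with an efficient subgroup tower---turns a \drhs~algorithm with inverse-polynomial two-sided error into an \rhs~algorithm with negligible completeness error. Finally, since negligible completeness error certainly qualifies as inverse-polynomial completeness, item~(1) of \expref{Proposition}{prop:RHS-HS-amplify} yields an algorithm solving \HS~with random $f$ with completeness $1-\negl$. As $1-\negl$ is non-negligible, this directly contradicts \expref{Assumption}{core-assumption}, which asserts that every efficient algorithm has completeness $1-\epsilon_R(i)=\negl(|i|)$ against random $f$; the theorem follows.

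The step demanding the most care---and the main obstacle to making the chain rigorous---is converting a \emph{distinguishing advantage} into a decision algorithm with genuinely small \emph{two-sided} error, as required by the hypothesis of \expref{Proposition}{prop:search-to-decision}. A bare gap $p_{\mathrm{yes}} - p_{\mathrm{no}} \geq 1/\poly$ does not on its own bound the completeness and soundness errors individually below $1$. To repair this I would exploit random self-reducibility: given a fixed instance, I re-randomize it---apply a random shift $t$ and compose the range with a fresh member of a $2k$-wise independent family, exactly as in the proof of \expref{Proposition}{prop:RHS-HS-amplify}---so that, by \expref{Theorem}{thm:k-wise} together with~\cite{Zhandry15}, $\algo D$ behaves on each run as it would on an independent fresh instance carrying the same yes/no answer. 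Running $\algo D$ on $O(n/\error^2)$ such re-randomizations and thresholding (with the threshold calibrated by estimating $p_{\mathrm{yes}}$ and $p_{\mathrm{no}}$ on self-generated labeled instances, which one can produce since both the random--random and the genuine-shift distributions are samplable) drives both error probabilities below any desired inverse polynomial. This supplies the \emd/\drhs~algorithm with inverse-polynomial two-sided error needed to enter the pipeline. The remaining work is bookkeeping: one verifies that the negligible losses incurred at each reduction accumulate to a negligible total, so that the final \HS~completeness is indeed $1-\negl$.
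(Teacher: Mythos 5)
Your proposal is correct and follows essentially the same route as the paper: assume the cipher is not a qPRF, obtain an \emd{} distinguisher, pass it through \expref{Proposition}{prop:DRHS-to-EMD}, amplify the resulting \drhs{} algorithm by re-randomizing with a random shift and a $2k$-wise independent hash plus repetition, and then chain \expref{Proposition}{prop:search-to-decision} and \expref{Proposition}{prop:RHS-HS-amplify} to contradict \expref{Assumption}{core-assumption}. If anything, you are slightly more careful than the paper at two points it treats tersely---bridging random functions versus random permutations via the collision bound of~\cite{Zhandry15}, and calibrating the threshold so that a bare distinguishing gap becomes genuinely small two-sided error before entering \expref{Proposition}{prop:search-to-decision}.
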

\begin{proof}
Let $\mathcal G$ be either the $\Z/{2^n}$ group family, or the $S_n$ group family. If the Even-Mansour cipher over $\mathcal G$ is not a qPRP, then by \expref{Definition}{def:qPRP}, there exists an algorithm $\algo D_\textsf{EMD}$ for the \textsf{EMD} problem with total (i.e., completeness plus soundness) error at most $1 - 1/s(n)$ for some polynomial $s$. To give the adversary as much freedom as possible, we assume that the probability of selecting the public permutation $P$ is taken into account here; that is, $\algo D_\textsf{EMD}$ need only succeed with inverse-polynomial probability over the choices of permutation $P$, keys $k_1, k_2$, and its internal randomness.

By \expref{Proposition}{prop:DRHS-to-EMD}, we then also have a \drhs~algorithm $\algo D_{\drhs}$ with error at most $1 - 1/s(n)$ (up to negligible terms). We can amplify this algorithm by means of a $2k$-wise independent hash function family $\mathcal H$, where $k$ is an upper bound on the running time of $\algo D_{\drhs}$ (for the given input size $n$ and required error bound $1/s(n)$). Given functions $f, g$ for the \drhs~problem on $G$, we select a random function $h \in \mathcal H$ and a random group element $t \in G$. We then call $\algo D_{\drhs}$ with oracles
$$
f' := h \circ f
\qquad \text{and} \qquad
g'_t := h \circ g \circ L_t
$$
Note that, to any efficient quantum algorithm, (i.) $f$ and $g$ are random if and only if $f'$ and $g'_{t}$ are, and (ii.) $g(x) = f(sx)$ if and only if $g'_t(x) = f'(stx)$. We know that $\algo D_{\drhs}$ will succeed with probability $1 - 1 / s(n)$, except the probability is now taken over the choice of $t$ and $h$ (rather than $f$ and $g$). We repeat this process with different random choices of $h$ and $t$. A straightforward application of a standard Chernoff bound shows that, after $O(p(n))$ runs, we will correctly distinguish with $1 - \negl(n)$ probability.

Finally, we apply \expref{Proposition}{prop:search-to-decision}, to get an algorithm for \RHS~with negligible error; by \expref{Proposition}{prop:RHS-HS-amplify}, we get an equally strong algorithm for \HS. \qed
\end{proof}

\subsubsection{Key recovery attacks.}

We now consider partial or complete key recovery attacks, in the setting where the adversary also gets oracle access to the inverses of $P$ and $E^P_{k_1, k_2}$. Note that, for the Even-Mansour cipher on any group $G$, knowing the first key $k_1$ suffices to produce the second key $k_2$, since
$$
k_2 = P(x \cdot k_1)^{-1} E^P_{k_1, k_2}(x)
$$ 
for every $x \in G$.

We remark that giving security reductions is now complicated by the fact that \RHS\ and its variants all become trivial if we are granted even a partial ability to invert $f$ or $g$; querying $f^{-1} \circ g$ on any input $x$ produces $x \cdot s^{-1}$, which immediately yields the shift $s$. However, we can still give a nontrivial reduction, as follows. 

\begin{theorem}\label{thm:HS-to-EM-key}
Consider the Even-Mansour cipher over $G \times G$, for any group $G$. Suppose there exists a QPT algorithm which, when granted oracle access to $P$, $E^{P_{k_1, k_2}}$, and their inverses, outputs $k_1, k_2$. Then there exists an efficient quantum algorithm for the \HS~problem over $G$.
\end{theorem}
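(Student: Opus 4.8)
The plan is to turn a key-recovery adversary into a Hidden Shift solver by embedding an arbitrary \HS\ instance $(f, f_s)$ over $G$ into a Hidden Shift Even-Mansour instance over $G \times G$, arranged so that the unknown shift $s$ appears as (part of) the secret key; running the adversary and reading off that coordinate then yields $s$. The subtlety the construction must overcome is exactly the remark preceding the theorem: the adversary demands oracle access to the \emph{inverses} $P^{-1}$ and $(E^P_{k_1,k_2})^{-1}$, whereas the \HS\ oracles $f$ and $g = f \circ L_s$ come with no inverse (and indeed \HS\ is trivial once an inverse is available). The product structure $G \times G$ is precisely what lets us present invertible oracles while only ever evaluating $f$ and $g$ in the forward direction.

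Concretely, writing an element of $G \times G$ as $(a, w)$ (the second coordinate holding the value $f(a)$; for the relevant instances one may take the range of $f$ to be $G$, or more generally work over $G \times W$ with $W$ the group of values), I would define the public permutation by
\[
  P(a, w) = (a,\, f(a) \cdot w)\,,
\]
a bijection whose inverse $P^{-1}(a, v) = (a,\, f(a)^{-1} \cdot v)$ needs only a forward query to $f$ together with an efficient group inversion --- crucially, never $f^{-1}$. The encryption oracle is built identically from $g$: set $Q(a,w) = (a,\, g(a)\cdot w)$, with $Q^{-1}(a,v) = (a,\, g(a)^{-1}\cdot v)$, again using forward queries only. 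A one-line computation --- fixing the left/right conventions consistently, which is harmless since \HS\ with left and with right shifts over $G$ are equivalent via $x \mapsto x^{-1}$ --- shows that $Q = E^P_{k_1, k_2}$ with $k_1 = (s, e)$ and $k_2 = (s^{-1}, e)$, since the shift $g(a)=f(sa)$ lives entirely in the first coordinate and lines up with the Even-Mansour key. Thus feeding the adversary the four oracles $(P, P^{-1}, Q, Q^{-1})$ and reading the first coordinate of $k_1$ recovers $s$.

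Two distributional points require attention before the adversary's guarantee applies. First, the key $((s,e),(s^{-1},e))$ is not uniform; I would remedy this by pre- and post-composing the encryption oracle with translations by freshly sampled uniform $\kappa_1, \kappa_2 \in G \times G$, presenting $\tilde E(x) = \kappa_2 \cdot Q(\kappa_1 \cdot x)$. This realizes $E^P_{K_1, K_2}$ with $K_1 = k_1 \kappa_1$ and $K_2 = \kappa_2 k_2$, whose joint distribution is uniform over the key space, and from which $s$ is recovered by dividing out the known offsets $\kappa_i$. Second, any \emph{valid} key must have the first coordinate of $K_1$ equal to $s$ (up to that known offset): when $f$ is injective this is forced, and when $f$ is random it holds except with the negligible collision probability bounded in~\cite{Zhandry15}, so the recovered shift is correct with overwhelming probability. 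This also lets the argument pass between the injective and random-$f$ forms of \expref{Assumption}{core-assumption}.

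The main obstacle is the remaining mismatch in the \emph{public} permutation: the padding map $P$ is highly structured (it fixes the first coordinate and translates the second), hence statistically far from a uniformly random permutation of $G \times G$. Because the Even-Mansour relation $E^P = L_{K_2} \circ P \circ L_{K_1}$ tolerates only translations as ``free'' modifications --- and translations randomize the key rather than $P$ --- one cannot mask $P$ with an independent random permutation without destroying the very relation that lets us simulate the encryption oracle from $g$. Consequently the reduction is cleanest (and this is the sense in which it is the \emph{weaker} reduction advertised in \expref{Section}{sec:intro}) when the key-recovery attack is assumed to succeed for a worst-case public permutation supplied by oracle, rather than only for a random one; since any attack in the spirit of the Simon's-algorithm key recovery is agnostic to the structure of $P$, this is a mild strengthening of the hypothesis. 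Granting it, the four simulated oracles form a faithful Even-Mansour instance, the adversary returns a valid key, and reading off $s$ gives an efficient quantum algorithm for \HS\ over $G$.
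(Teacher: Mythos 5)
Your proposal is correct and takes essentially the same route as the paper: both embed the \HS\ instance via the ``padding'' permutation $(a,w)\mapsto(a,\,w\cdot f(a))$ on $G\times G$, simulate all four oracles (including inverses) using only forward queries to $f$ and $g$, absorb the shift $s$ into the first coordinate of the key, and uniformize the key by composing with uniformly random translations---the paper merely swaps which of $f,g$ backs the public versus the encryption oracle, which is immaterial. Your two additional observations (that the simulated public permutation is far from uniform, so the adversary must be assumed to succeed for the given worst-case $P$, and that correctness of the recovered coordinate rests on injectivity or Zhandry's collision bound) make explicit points the paper's proof leaves tacit.
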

\begin{proof}
We are given oracle access to functions $f, g : G \rightarrow G$ and a promise that there exists $s \in G$ such that $f(x) = g(x\cdot s)$ for all $x \in G$. We define the following oracles, which can be constructed from access to $f$ and $g$. First, we have permutations $P_f, P_g : G \times G \rightarrow G \times G$ defined by
$$
P_f (x, y) = (x, y \cdot f(x))
\qquad \text{and} \qquad
P_g(x, y) = (x, y \cdot g(x))\,.
$$
Now we sample keys $k_1 = (x_1, y_1), k_2 = (x_2, y_2)$ from $G \times G$ and define the function $E := E^{P_f}_{k_1, k_2}$. To the key-recovery adversary $\algo A$ for Even-Mansour over $G \times G$, we provide the oracles $E$ and $E^{-1}$ for the encryption/decryption oracles, and the oracles $P_g$ and $P_g^{-1}$ for the public permutation oracles.

To see that we can recover the shift $s$ from the output of $\algo A$, we rewrite $E$ in terms of $g$, as follows:
\begin{align*}
E(x, y) 
& = P_f ( x x_1, y y_1)  \cdot (x_2, y_2)\\
& = (x x_1, y y_1 f(x x_1)) \cdot (x_2, y_2)\\
& = (x x_1 s, y y_1 f(x x_1)) \cdot (s^{-1} x_2, y_2)\\
& = (x x_1 s, y y_1 g(x x_1 s)) \cdot (s^{-1} x_2, y_2)\\
& = P_g(x x_1 s, y y_1) \cdot (s^{-1} x_2, y_2)\,.
\end{align*}
After complete key recovery, $\algo A$ will output $(x_1 s, y_1)$ and $(s^{-1} x_2, y_2)$, from which we easily deduce $s$.
\qed
\end{proof}

\paragraph{Remark.}
The reduction above focuses on the problem of recovering the entire key. Note that for certain groups, e.g., $\Z/p$ for prime $p$, predicting any bit of the key with inverse-polynomial advantage is sufficient to recover the entire key (see \citet{HN04}). In such cases we may conclude that predicting individual bits of the key is difficult.

\section{Hidden Shift CBC-MACs}

\subsection{Generalizing the Encrypted-CBC-MAC scheme}

\paragraph{The standard scheme.}

The standard Encrypted-CBC-MAC construction requires a pseudorandom permutation $E_k : \bits^n \rightarrow \bits^n$. A message $m$ is subdivided into blocks $m = m_1 || m_2 || \cdots || m_l$, each of length $n$. The tag is then computed by repeatedly encrypting-and-XORing the message blocks, terminating with one additional round of encryption with a different key. Specifically, we set
$$
\text{CBC-MAC}_{k, k'} := E_{k'}( E_k ( m_l \oplus E_k ( \cdots E_k(m_2 \oplus E_k(m_1)) \cdots )))\,.
$$
This yields a secure MAC for variable-length messages.

\paragraph{Quantum chosen plaintext attacks on the standard scheme.}

If we are granted quantum CPA access to CBC-MAC$_{k, k'}$, then there is a $(\Z/2)^n$-hidden-shift attack, described below. This attack was described in~\cite{KLLN16}; another version of the attack appears in~\cite{SS17}. Consider messages consisting of two blocks, and fix the first block to be one of two distinct values $\alpha_0 \neq \alpha_1$. We use the oracle for $\text{CBC-MAC}_{k, k'}$ to construct an oracle for the function
$$
f(b, x) := \text{CBC-MAC}_{k, k'}(\alpha_b || x) = E_{k'}(E_k(x \oplus E_k(\alpha_b)))\,.
$$
Note that $f$ satisfies Simon's promise, since 
$$
f(b \oplus 1, x \oplus E_k(\alpha_0) \oplus E_k(\alpha_1)) = f(b, x)
$$ 
for all $b, x$. We can thus run Simon's algorithm to recover the string $s_k = E_k(\alpha_0) \oplus E_k(\alpha_1)$. Knowledge of $s_k$ enables us to find an exponential number of collisions, since
$$
\text{CBC-MAC}_{k, k'}(\alpha_0 || x) = \text{CBC-MAC}_{k, k'}(\alpha_1 || x \oplus E_k(\alpha_0) \oplus E_k(\alpha_1))\,.  
$$
In particular, this CBC-MAC does not satisfy the Boneh-Zhandry notion of a secure MAC in the quantum world~\cite{BZ13a}.

\paragraph{Hidden Shift CBC-MAC.}

We propose generalizing the Encrypted-CBC-MAC construction above, by allowing the bitwise XOR operation to be replaced by composition in some exponentially-large family of finite groups $G$. Each message block is then identified with an element of $G$, and we view the pseudorandom permutation $E_k$ as a permutation of the group elements of $G$. We then define
\begin{align*}
\text{CBC-MAC}^G_{k, k'} : G^* &\longrightarrow G \\
(m_1, \dots, m_l) &\longmapsto E_{k'}( E_k ( m_l \cdot E_k ( \cdots E_k(m_2 \cdot E_k(m_1)) \cdots )))\,,
\end{align*}
where $\cdot$ denotes the group operation in $G$.\\

\begin{scheme}[Hidden Shift Encrypted-CBC-MAC]
Let $G$ be a family of finite, exponentially large groups satisfying the efficient encoding conditions given in \expref{Section}{sec:groups}. Let $E_k : G \rightarrow G$ be a quantum-secure pseudorandom permutation. The scheme consists of three polynomial-time algorithms:
\begin{itemize}
\item $\KeyGen$; on input $|G|$, outputs two keys $k, k'$ using key generation for $E$;
\item $\Mac_{k, k'} : m \longmapsto E_{k'}( E_k ( m_l \cdot E_k ( \cdots E_k(m_2 \cdot E_k(m_1)) \cdots )$;
\item $\Ver_{k, k'} : (m, t) \mapsto \textsf{accept}$ if $\Mac_{k, k'}(m) = t$, and \textsf{reject} otherwise.
\end{itemize}
\end{scheme}

We consider the security of this scheme in the next section.

\subsection{Security reduction}

We now give a reduction from the \RHS~problem to collision-finding in the above CBC-MAC.

\begin{theorem}
Let $\mathcal G$ be either the $\Z/{2^n}$ group family or the $S_n$ group family. Under \expref{Assumption}{core-assumption}, the Hidden-Shift CBC-MAC over $\mathcal G$ is a collision-resistant function.
\end{theorem}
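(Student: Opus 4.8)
The plan is to argue by contradiction, reducing \RHS\ over $\mathcal G$ to the task of finding collisions in the Hidden Shift CBC-MAC and then invoking the amplification result already proved. So suppose some QPT $\algo A$ violates collision-resistance, returning a colliding pair with non-negligible probability $\mu(n)$. From $\algo A$ I would build a QPT algorithm $\algo B$ for \rhs\ which, on oracles $(f,g)$ with $f$ random and $g = f \circ L_s$, recovers $s$ with inverse-polynomial completeness. By \expref{Proposition}{prop:RHS-HS-amplify} this amplifies to a worst-case-$s$, random-$f$ solver of completeness $1-\negl(|i|)$, contradicting \expref{Assumption}{core-assumption}.

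First I would isolate where a hidden shift lives inside the construction. Since the outer permutation $E_{k'}$ is injective, a collision of the tag is exactly a collision of the ``raw'' CBC value, and the most accessible such collisions come from two-block messages with a fixed first block. Writing $h_\beta(x) := \Mac_{k,k'}(\beta \| x)$ for the branch function obtained by fixing the first block to $\beta$, one checks that $h_{\alpha_1} = h_{\alpha_0}\circ R_\delta$, where $R_\delta$ is right-multiplication by $\delta = E_k(\alpha_1)\cdot E_k(\alpha_0)^{-1}$; hence any cross-branch collision $h_{\alpha_0}(x)=h_{\alpha_1}(x')$ immediately yields $\delta = (x')^{-1}x$. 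Thus a collision-finder exploiting this structure is really solving a \HS\ instance whose shift is the internal quantity $\delta$. Because the CBC-MAC produces a right shift while \expref{Assumption}{core-assumption} is stated with the left shift $f_s(x)=f(sx)$, I would translate between the two using $x\mapsto x^{-1}$, which is efficient since $\mathcal G$ supports efficient inversion.

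The core of the reduction is to instantiate the MAC oracle handed to $\algo A$ so that (i) it is indistinguishable from an honestly-keyed Hidden Shift CBC-MAC, and (ii) the internal shift $\delta$ of its branch functions equals a re-randomization of the unknown $s$ from the given \rhs\ instance. I would use the oracles $f,g$ themselves to supply the two branches $h_{\alpha_0},h_{\alpha_1}$, while simulating the remaining action of the inner permutation $E_k$ on all other inputs by a random permutation that a $q$-query adversary cannot distinguish from the honest qPRP (using \expref{Theorem}{thm:k-wise} together with the collision bound of~\cite{Zhandry15} to pass between random-valued and injective $f$, exactly as in \expref{Proposition}{prop:RHS-HS-amplify} and \expref{Proposition}{prop:DRHS-to-EMD}). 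Given a collision returned by $\algo A$, I would read off $\delta$, undo the re-randomization, and output $s$, yielding the desired inverse-polynomial \rhs\ solver.

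The step I expect to be genuinely delicate is precisely this simulation. The tension is that if $\algo B$ samples the inner permutation $E_k$ itself then it already knows the induced shift $\delta$, so a returned collision conveys nothing new; to make $\delta$ carry the unknown $s$, the two branches must be pinned to $f,g$ while the rest of $E_k$ remains only implicitly specified, and one must verify that this partial specification extends to a globally consistent, variable-length CBC-MAC that $\algo A$ cannot distinguish from the honest scheme (reconciling, along the way, the fact that $f$ is a random \emph{function} whereas $E_k$ must be a permutation). A second subtlety is that $\algo A$ is promised only to return \emph{some} collision: I would need to argue, via the random self-reducibility of \HS, that the planted instance can be re-randomized so that every first-block pair is symmetric, whence whatever collision $\algo A$ produces maps back to a solution of the target \rhs\ instance rather than to a useless generic collision. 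Once this is established, the conclusion follows by chaining \expref{Proposition}{prop:RHS-HS-amplify} (and, if one prefers to route through the decisional problem, \expref{Proposition}{prop:search-to-decision}) to contradict \expref{Assumption}{core-assumption}.
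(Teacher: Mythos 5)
Your overall strategy---reduce \RHS\ to collision-finding and then invoke \expref{Proposition}{prop:RHS-HS-amplify}---matches the paper's, and your structural observation about the branch functions $h_\beta$ (that $h_{\alpha_1}$ is a right-shift of $h_{\alpha_0}$ by $E_k(\alpha_1)E_k(\alpha_0)^{-1}$) is correct. But there is a genuine gap at exactly the point you flag as delicate. Your embedding plants the unknown shift only between the two branches $h_{\alpha_0},h_{\alpha_1}$ of two-block messages, so it extracts $s$ only from collisions of a very special form: two-block messages whose first blocks are the two designated values. A collision-finding adversary is under no obligation to return such a collision---it may collide two long messages, two messages with equal first blocks that diverge at a later position, or messages whose first blocks lie outside $\{\alpha_0,\alpha_1\}$. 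Your proposed remedy (re-randomize ``so that every first-block pair is symmetric'') cannot work as stated: the pairwise shifts between the exponentially many branches are all determined by the single permutation $E_k$, so they cannot all be made to encode the one unknown shift $s$ consistently, and in any case a collision need not differ in the first block at all.

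The paper closes this gap with a different mechanism. It first normalizes the adversary so that any returned collision $(c,c')$ has no colliding proper prefixes, guesses the block position $t$ at which the collision is ``activated'' (correct with inverse-polynomial probability since message lengths are polynomial), and then replaces the single application of $E_k$ at stage $t$ by $F_{b(m)}$, where $b(m)$ is a fresh random bit per message prefix and $(F_0,F_1)$ is the given \HS\ instance. Because $b$ is chosen independently of which messages the adversary will eventually collide, \emph{any} collision between distinct messages puts the two sides in different branches with probability $1/2$, at which point the equality $F_0(y_1\cdot h(x_1)) = F_1(y_2\cdot h(x_2)) = F_0(y_2\cdot h(x_2)\cdot s)$ hands you the shift directly as $s = y_2^{-1}h(x_2)^{-1}y_1 h(x_1)$. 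This guess-the-stage-plus-random-selector-bit device is the missing idea; without it (or something playing the same role) your reduction only succeeds against adversaries that happen to cooperate with your choice of planted branches.
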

\begin{proof}
For simplicity, we assume that the collision-finding adversary finds collisions between equal-length messages. This is of course trivially true, for example, if the MAC is used only for messages of some a priori fixed length.

Suppose we are given an instance of the \textsc{Hidden Shift} problem, i.e., a pair of functions $F_0, F_1$ with the promise that $F_0$ is random and $F_1$ is a shift of $F_0$. We have at our disposal a QPT $\algo A$ which finds collisions in the Hidden Shift Encrypted-CBC-MAC\@. We assume without loss of generality that, whenever $\algo A$ outputs a collision $(c, c')$, there is no pair of prefixes of $(c, c')$ that also give a valid collision; indeed, we can easily build an $\algo A'$ which, whenever such prefixes exist, simply outputs the prefix collision instead. 

We assume for the moment that the number of message blocks in $c$ and $c'$ is the same number $t$. Since the number of blocks and the running time of $\algo A$ are polynomial, we can simply guess $t$, and we will guess correctly with inverse-polynomial probability. We run $\algo A$ with a modified oracle $\mathcal O$ which ``inserts'' our hidden shift problem at stage $t$. This is defined as follows. 

Let $m$ be our input message, and $l$ the number of blocks. If $l < t$, we simply output the usual Encrypted-CBC-MAC of $m$. If $l \geq t$, we first perform $t-1$ rounds of the CBC procedure, computing a function
$$
h (m) :=  E_k( m_{t-1} \cdot E_k(\cdots E_k(m_2 \cdot E_k(m_1)) \cdots )\,.
$$
Note that $h$ only depends on the first $t-1$ blocks of $m$. Next, we choose a random bit $b$ 
and compute $F_{b(m)}(m_t \cdot h(m))$. We then finish the rest of the rounds of the CBC procedure, outputting
$$
O(m) := E_k' ( E_k ( m_l \cdot E_k(\cdots E_k(F_{b(m)}(m_t \cdot h(m))) \cdots )\,.
$$
It's not hard to see that the distribution that the adversary observes will be indistinguishable from the usual Encrypted-CBC-MAC\@. Suppose a collision $(m, m')$ is output. We set $x_1 = m_1 || m_2 || \cdots || m_{t-1}$ and $x_2 = m_1' || m_2' || \cdots || m_{t-1}'$ and $y_1 = m_t$ and $y_2 = m_t'$. The collision then means that
$$
F_{b(m)}(y_1 \cdot h(x_1)) = F_{b(m')}(y_2 \cdot h(x_2))\,.
$$
Since $m \neq m'$, with probability $1/2$ we have $b(m) \neq b(m')$. We repeat $\algo A$ until we achieve inequality of these bits. We then have
$$
F_0(y_1 \cdot h(x_1)) = F_1(y_2 \cdot h(x_2)) = F_0(y_2 \cdot h(x_2) \cdot s)
$$
and so the shift is simply $s = y_2^{-1} h(x_2)^{-1} y_1 h(x_1)$.
\qed
\end{proof}

\section{Thwarting the Simon attack on other schemes}

It is reasonable to conjecture that our transformation secures (classically secure) symmetric-key schemes against quantum CPA, generically. So far, we have only been able to give complete security reductions in the cases of the Even-Mansour cipher and the Encrypted-CBC-MAC\@. For the case of all other schemes vulnerable to the Simon algorithm attacks of~\cite{KM10, KM12, KLLN16}, we can only say that the attack is thwarted by passing from $(\Z/2)^n$ to $\Z/{2^n}$ or $S_n$. We now briefly outline two cases of particular note. For further details, see \expref{Appendix}{sec:appendix-other}.

The first case is the Feistel network construction, which transforms random functions into pseudorandom permutations. While the three-round Feistel cipher is known to be classically secure~\cite{LR88}, no security proof is known in the quantum CPA case, for any number of rounds. In~\cite{KM10}, a quantum chosen-plaintext attack is given for the three-round Feistel cipher, again based on Simon's algorithm. The attack is based on the observation that, if one fixes the first half of the input to one of two fixed values $\alpha_0 \neq \alpha_1$, then the output contains one of two functions $f_{\alpha_0}$, $f_{\alpha_1}$, which are $(\Z/2)^n$-shifts of each other. However, if we instead replace each bitwise XOR in the Feistel construction with addition modulo $\Z/{2^n}$, the two functions become $\Z/{2^n}$-shifts, and the attack now requires a cyclic \HS~subroutine.

The second case is what~\cite{KLLN16} refer to as the ``quantum slide attack,'' which uses Simon's algorithm to give a linear-time quantum chosen-plaintext attack, an exponential speedup over classical slide attacks. The attack works against ciphers $E_{k, t}(x) := k \oplus (R_k)^t(x)$ which consist of $t$ rounds of a function $R_k(x) := R(x \oplus k)$. In the attack, one simply observes that $E_{k, t} (R(x))$ is a shift of $R(E_{k, t} (x))$ by the key $k$, and then applies Simon's algorithm. To defeat this attack, we simply work over $\Z/{2^n}$, setting $E_{k, t}(x) := k + (R_k)^t(x)$ and $R_k(x) := R(x+k)$. It's easy to see that the same attack now requires a \HS~subroutine for $\Z/{2^n}$.

\section*{Acknowledgments}

G.\,A.\ would like to thank Tommaso Gagliardoni and Christian Majenz for helpful discussions. G.A. acknowledges financial support from the European Research Council (ERC Grant Agreement 337603), the Danish Council for Independent Research (Sapere Aude) and VILLUM FONDEN via the QMATH Centre of Excellence (Grant 10059). A.\,R.\ acknowledges support from NSF grant IIS-1407205.

\bibliography{alagic-full-bib}

\appendix

\section{Hidden Subgroups and Hidden Shifts}\label{appendix:HSP} 

\subsubsection{Basic definitions.}
We now briefly discuss the \HSP, which is closely related to \HS.
\begin{problem}[\HSP~(\hsp)]
Let $G$ be a group and $S$ a set. Given a function $f : G \rightarrow S$, and a promise that there exists $H \leq G$ such that $f$ is constant and distinct on the right cosets of $H$, output a complete set of generators for $H$.
\end{problem}

Another, equivalent formulation of the \hsp~promise on $f$ is that for $x \neq y$, $f(x) = f(y)$ iff $x = h \cdot y$ for $h \in H$. As before, one can also consider decision versions of \hsp~(e.g., where one has to decide if $f$ hides a trivial or nontrivial subgroup) and promise versions where the function $f$ is a random function satisfying the constraint that $f(hx) = x$ for all $x \in G$ and $h \in H$. This last variant is important for our purposes so we separately define it.

\begin{problem}[\RHSP~(\rhsp)]
Let $G$ be a group and $S$ a set. Given a function $f : G \rightarrow S$ chosen uniformly among all functions for which $f(x) = f(hx)$ for all $x \in G$ and $h \in H$, output a complete set of generators for $H$.
\end{problem}

\subsubsection{Some reductions.}

Traditionally, the Hidden Subgroup problem (HSP) has played a prominent role in the literature, as it offers a simple framework to which many other problems can be directly reduced. Indeed, there is a general reduction from \hs~to \hsp.

\paragraph{A canonical reduction from \hs~to \hsp.} Consider an instance
of a \HS~problem over $G$ given by the functions
$f_0, f_1: G \rightarrow S$ such that $f_0(x) = f_1(x \cdot
s)$. Recall that the \emph{wreath product} $K = G \wr \Z/2$ is the
semi-direct product $(G \times G) \rtimes \Z/2$, where the action of
the nontrivial element of $\Z/2$ on $G \times G$ is the swap
$(a, b) \mapsto (b, a)$. Now define the function
\begin{align*}
\varphi : (G \times G) \rtimes \Z/2 &\longrightarrow S \times S\\
((x, y), b) &\longmapsto (f_b(x), f_{b \oplus 1}(y))\,.
\end{align*}
One then easily checks that the function $\varphi$ is constant and distinct on the cosets of the order-two subgroup of $K$ generated by $((s, s^{-1}), 1)$. 

We remark that the reduction above can significantly ``complicate'' the underlying group. In particular, note that $A \rtimes \Z/2$ is always non-abelian (unless the action of $1 \in \Z/2 = \{0,1\}$ on $A$ is trivial). Note that this reduction does not yield a reduction from \rhs~to \rhsp~as the resulting \hsp~instance is not uniformly random in the fully random case.

\paragraph{The special case of $(\Z/2)^n$; reductions from \rhs~to \rhsp.}
On $(\Z/2)^n$, the \HS~problem can be reduced to the \RHSP~on $(\Z/2)^n$ via a special reduction that exploits
$\Z/2$ structure. Specifically, for a pair of injective functions
$f_0, f_1: (\Z/2)^n \rightarrow S$ (for which
$f_0(x) = f_1(x \oplus s)$,) construct the function
$f: (\Z/2)^{n+1} \rightarrow S$ so that
\[
g(bx) = f_b(x)\,,\qquad\text{for $b \in \Z/2$ and $x \in (\Z/2)^n$.}
\]
Then observe that $g$ hides the subgroup generated by $1s$.

Note, furthermore, that if the $f_i$ are (independent) random
functions, then the same can be said of $g$; likewise, if $f_1$ is a
shift of the random function $f_0$, the function $g$ is precisely a
random function subject to the constraint that
$g(x) = g(x \oplus 1s)$; thus this reduces \rhs~to \rhsp. In this \rhs~setting, it is possible to develop an alternate reduction that more closely resembles the attacks we discussed above. Specifically, given the functions $f_0,f_1: (\Z/2)^n \rightarrow (\Z/2)^n$ (so that $S = (\Z/2)^n$), consider the oracle $g = f_0 \oplus f_1$. When $f_1(x) = f_0(x \oplus s)$, note that this oracle satisfies the symmetry condition
\begin{align*}
[f_0 \oplus f_1](x \oplus  s) &= f_0(x \oplus  s) \oplus  f_1(x \oplus  s)\\
  &= f_1(x \oplus  s \oplus  s) \oplus f_0(x) = f_0(x) \oplus f_1(x) = [f_0 \oplus f_1] (x)
\end{align*}
so that $g = f_0 \oplus f_1$ is a random function subject to the constraint that $g(x) = g(x \oplus s)$, as desired. If the functions $f_i$ are independent, the function $g$ has the uniform distribution. Thus this reduces \rhs~to \rhsp.

\section{Other Hidden Shift Constructions}\label{sec:appendix-other}

\subsection{Feistel ciphers}

\paragraph{The standard scheme.}

The Feistel cipher is a method for turning random functions into pseudorandom permutations. The core ingredient is a one-round Feistel cipher, which, for a function $f: \{0,1\}^n \rightarrow \{0,1\}^n$, is given by
\begin{align*}
F_f : \{0,1\}^{2n} &\longrightarrow \{0,1\}^{2n}\\
x || y &\longmapsto y \oplus f(x) || x\,.
\end{align*}
The function $f$ is called the ``round function.'' The multi-round version of the Feistel cipher is defined by concatenating multiple one-round ciphers, each with a different choice of round function. Of particular interest is the three-round cipher, defined by
$$
F_{R_1, R_2, R_3}(x || y) = F_{R_3}(F_{R_2}(F_{R_1}(x || y)))\,.
$$
A well-known result of Luby and Rackoff says that, if the $R_j$ are random and independent, then $F_{R_1, R_2, R_3}$ is indistinguishable from a random permutation~\cite{LR88}.

\paragraph{Quantum chosen plaintext attacks on the standard scheme.}

Suppose we are given quantum oracle access to a function $F$, and promised that $F$ is either a random permutation, or that $F := F_{R_1, R_2, R_3}$ for some unknown, random functions $R_j$. The following attack was first shown in~\cite{KM10}; a thorough analysis appears in~\cite{KLLN16}. We first fix two $n$-bit strings $\alpha_0 \neq \alpha_1$. We then use the oracle for $F$ to build oracles $f_0, f_1$ defined by
$$
f_b(y) := F(\alpha_b || y)\bigr|_{n+1}^{2n}  \oplus \alpha_b\,.
$$
Here $s|_{j}^k := s_j s_{j+1} \cdots s_k$. We then run Simon's algorithm to see if there's a shift between $f_0$ and $f_1$. If a shift is produced, we output ``Feistel.'' Otherwise we output ``random.'' 

To see why the attack is sucessful, first note that if $F$ is a random permutation, then the $f_b$ are random functions. On the other hand, if $F = F_{R_1, R_2, R_3}$, then one easily checks that $f_b(y) = R_2(y \oplus R_1(\alpha_b))$. We then have
$$
f_1(y) = f_0(y \oplus (R_1(\alpha_0) \oplus R_1(\alpha_1)))
$$
for all $y$. Since the $R_j$ are random, one can check that there are not too many other collisions~\cite{KLLN16}. It follows that Simon's algorithm will output $R_1(\alpha_0) \oplus R_1(\alpha_1)$ with high probability.

\paragraph{Hidden Shift Feistel cipher.}

For simplicity, we will work over the group $\Z/{2^n}$. Our construction generalizes to other group families in a straightforward way. Given a function $f : \Z/{2^n} \rightarrow \Z/{2^n}$, we define the one-round Feistel cipher using round function $f$ to be
\begin{align*}
F_f : \Z/{2^n} \times  \Z/{2^n}  &\longrightarrow \Z/{2^n}  \times \Z/{2^n}\\
(x, y) &\longmapsto (y + f(x), x)\,.
\end{align*}
Since $\Z/{2^n} \times \Z/{2^n} \cong \Z/{2^{2n}}$, we can then view $F_f$ as a permutation on $2n$-bit strings. 
For multi-round ciphers, we define composition as before. In particular, given three functions $R_1, R_2, R_3$ on $G$, we get the three-round Feistel cipher
\begin{align*}
F_{R_1, R_2, R_3} : \Z/{2^n} \times \Z/{2^n} &\longrightarrow \Z/{2^n} \times \Z/{2^n}\\
(x, y) &\longmapsto F_{R_3}(F_{R_2}(F_{R_1}(x, y)))\,.
\end{align*}

Next, we check that the attack of~\cite{KM10} now appears to require a subroutine for the Hidden Shift problem over $\Z/{2^n}$, contrary to our Cyclic Hidden Shift Assumption (\expref{Assumption}{cyclic-assumption}). We are given a function $F$ on $\Z/{2^{2n}}$ with the promise that $F$ is either random or a Feistel cipher $F_{R_1, R_2, R_3}$ as above. Proceeding precisely as before, we pick two elements $\alpha_0 \neq \alpha_1$ of $\Z/{2^n}$ and build two functions
$$
f_b(y) := F(\alpha_b, y)|_{n+1}^{2n} - \alpha_b\,.
$$
If $F$ is random, then clearly so are $f_0$ and $f_1$. But if $F = F_{R_1, R_2, R_3}$ then one easily checks that
$$
f_b(y) = R_2(y + R_1(\alpha_b))\,,
$$
from which it follows that
$$
f_1(y) = R_2(y + R_1(\alpha_1)) = R_2(y + R_1(\alpha_0) - R_1(\alpha_0) + R_1(\alpha_1))  = f_0(y + s)
$$
where we set $s := R_1(\alpha_1) - R_1(\alpha_0)$. We are thus presented with a \HS~problem over the group $\Z/{2^n}$, which is hard according to \expref{Assumption}{cyclic-assumption}. The only known subroutine (analogous to Simon) that one could apply here would be Kuperberg's algorithm, which would find $s$ in time $2^{\Theta(\sqrt{n})}$~\cite{Kuperberg05}. Defining the Feistel network over other groups (such as $S_n$) would frustrate all nontrivial quantum-algorithmic approaches, including the Kuperberg approach~\cite{MRS07}.

\subsection{Protecting against quantum slide attacks}

\paragraph{Quantum slide attack.}

Classically, slide attacks are a class of subexponential-time attacks against ciphers which encrypt simply by repeatedly applying some function $R_k$, with a fixed key $k$. Kaplan et al.~\cite{KLLN16} showed how Simon's algorithm can be used to give a polynomial-time ``quantum slide attack'' against ciphers of the form
$$
E_{k, t} := k \oplus R_k^t(x) = k \oplus (R_k \circ R_k \cdots \circ R_k)(x)\,,
$$
where $R_k(x) = R(x \oplus k)$, and $R$ is a known permutation. As usual, the attack requires quantum CPA access to $E_{k, t}$. The attack follows directly from the observation that the functions
$$
f_b(x) =
\begin{cases}
E_{k, t} (R(x)) \oplus x &\text{if $b = 0$,}\\
R(E_{k, t} (x)) \oplus x &\text{if $b = 1$.}
\end{cases}
$$
are shifts of each other by the key $k$, i.e., $f_0(x \oplus k)  = f_1(x)$ for all $x$. This means we can extract $k$ with Simon's algorithm.

\paragraph{Eliminating the attack via hidden shifts.}

Following our established pattern, we adapt schemes $E_{k, t}$ to use modular addition over $\Z/{2^n}$ instead of bitwise XOR\@. Given a permutation $R$ of $\bits^n$, we now set
$$
R_k(x) := R(x + k)
\qquad \text{and} \qquad
E_{k, t} := k + R_k^t(x)\,.
$$
Proceeding with the attack as before, we now define
$$
f_b(x) =
\begin{cases}
E_{k, t} (R(x)) - x &\text{if $b = 0$,}\\
R(E_{k, t} (x)) - x &\text{if $b = 1$.}
\end{cases}
$$
We then check that
\begin{align*}
f_0(x + k) 
&= E_{k, t}(R(x + k)) - (x + k) \\
&= k + R_k^t(R(x + k)) - x - k \\
&= R(E_{k, t}(x)) - x\\
&= f_1(x)\,.
\end{align*}
Continuing as in the Simon attack would now require a solution to the \HS~problem over $\Z/{2^n}$.

\end{document}